\DeclareMathAlphabet{\mathcal}{OMS}{cmsy}{m}{n} 
\newcommand{\p}{^}
	\newcommand{\rmd}{\mathrm{d}}
\newcommand{\rmi}{\mathrm{i}}
\newcommand{\R}{\mathrm{R}}
\newcommand{\Lrm}{\mathrm{L}}
\newcommand{\Var}{\mathrm{Var}}
\newcommand{\Cov}{\mathrm{Cov}}
\newcommand{\Cor}{\mathrm{Cor}}
\newcommand{\sig}{\sigma}
\newcommand{\Om}{\Omega}
\renewcommand{\Re}{\mathrm{Re}\,}
\newcommand{\E}[1]{\mathbb{E}\left[#1\right]}                     
\newcommand{\Econd}[2]{\mathbb{E}\left[\left.#1\right|#2\right]} 
\newcommand{\Ex}[2]{\mathbb{E}^{#1}\left[#2\right]}                     
\newcommand{\set}[1]{\left\{#1\right\}}
\newcommand{\norm}[1]{\left\|#1\right\|}
\newcommand{\scal}[2]{\langle#1,#2\rangle}
\newcommand{\aPP}[2]{\ensuremath{\langle #1,#2 \rangle}}
\newtheorem{theorem}{Theorem}[section]
\newtheorem{lemma}[theorem]{Lemma}
\newtheorem{proposition}[theorem]{Proposition}
\newtheorem{corollary}[theorem]{Corollary}
\theoremstyle{definition}
\newtheorem{definition}[theorem]{Definition}
\newtheorem{remark}[theorem]{Remark}
\numberwithin{equation}{section}
\newcommand{\ii}{\mathfrak{i\,}}
\newcommand{\RR}{\mathbb{R}}
\newcommand{\QQ}{\mathbb{Q}}
\newcommand{\PP}{\mathbb{P}}
\newcommand{\CC}{\mathbb{C}}
\newcommand{\MM}{\mathbb{M}}
\newcommand{\FF}{\mathbb{F}}
\newcommand{\EE}{\mathbb{E}}
\newcommand{\cA}{\mathcal{A}}
\newcommand{\cB}{\mathcal{B}}
\newcommand{\cC}{\mathcal{C}}
\newcommand{\cD}{\mathcal{D}}
\newcommand{\cF}{\mathcal{F}}
\newcommand{\cH}{\mathcal{H}}
\newcommand{\cS}{\mathcal{S}}
\newcommand{\cL}{\mathcal{L}}
\begin{document}
\title{Semi-Static and Sparse Variance-Optimal Hedging}
\author{Paolo Di Tella}
\author{Martin Haubold}
\author{Martin Keller-Ressel}
\address{Institute for Mathematical Stochastics, TU Dresden}
\address{Institute for Mathematical Stochastics, TU Dresden}
\address{Institute for Mathematical Stochastics, TU Dresden}
\thanks{MKR thanks Johannes Muhle-Karbe for early discussions on the idea of `variance-optimal semi-static hedging'. We acknowledge funding from the German Research Foundation (DFG) under grant ZUK 64 (all authors) and KE 1736/1-1 (MKR, MH)}
\subjclass[2010]{91G20, 60H30}
\begin{abstract}
We consider hedging of a contingent claim by a `semi-static' strategy composed of a dynamic position in one asset and static (buy-and-hold) positions in other assets.
We give general representations of the optimal strategy and the hedging error under the criterion of variance-optimality and provide tractable formulas using Fourier-integration in case of the Heston model. We also consider the problem of optimally selecting a sparse semi-static hedging strategy, i.e.\@ a strategy which only uses a small subset of available hedging assets. The developed methods are illustrated in an extended numerical example where we compute a sparse semi-static hedge for a variance swap using European options as static hedging assets.
\end{abstract}

\date{}
\maketitle

\tableofcontents

\section{Introduction}
Semi-static hedging strategies are strategies that are composed of a dynamic (i.e.\@ continuously rebalanced) position in one asset and of static (i.e.\@ buy-and-hold) positions in other assets. Such hedging strategies have appeared in mathematical finance in several different contexts: The hedging of Barrier options (cf.~\cite{carr2011semi}), model-free hedging approaches based on martingale optimal transport (cf.~\cite{beiglbock2013model}), and -- most relevant in our context -- the semi-static replication of variance swaps by Neuberger's formula (cf.~\cite{neuberger1994log}). Compared with fully dynamic strategies, semi-static strategies have the advantage that no rebalancing costs or liquidity risks are associated with the static part of the strategy and hence even assets with limited liquidity can be used as static hedging assets. 

Remarkably, for certain hedging problems, semi-static strategies allow for perfect replication even in incomplete markets -- at least theoretically. Again, the most prominent example is the replication formula for a variance swap, given by \cite{neuberger1994log, CM98}: In any continuous martingale model, a variance swap can be replicated by dynamic hedging in the underlying and a static portfolio of European put- and call-options. This very replication formula is at the heart of the computation of the volatility index \texttt{VIX}, whose value is determined precisely from a discretization of Neuberger's replicating option portfolio (cf.\@ the CBOE's technical document \cite{exchange2014cboe}).

However, Neuberger's result relies on certain idealizations: Most importantly, the static part of the strategy consists of \emph{infinitesimally small} positions in an \emph{infinite number} of puts and calls with strikes ranging from zero to infinity. Any practical implementation of this strategy therefore has to decide on a certain quantization of the theoretical strategy, i.e.\@ how to assign non-infinitesimal weights to the actually tradable put- and call-options. Rather than doing this in an ad-hoc manner, our goal is to determine how to \emph{optimally} implement a semi-static hedging strategy when a finite number $n$ of hedging assets is available. Our optimality criterion is the well-known variance-optimality criterion introduced by \cite{schweizer1984varianten, FS86}, i.e.\@ we minimize the variance of the residual hedging error under the risk-neutral measure. As we show in Section~\ref{sec:varo}, this criterion is pleasantly compatible with semi-static hedging: The semi-static hedging problem separates into an inner problem, which is equivalent to the variance-optimal hedging problem with a single asset (as considered in \cite{schweizer1984varianten, FS86}) and an outer problem, which is an $n$-dimensional quadratic optimization problem, cf.\@ Theorem~\ref{thm:semistatic}.\\

After having analyzed the general structure of the variance-optimal semi-static hedging problem, we turn to another question in Section~\ref{sec:sparse}: How many assets $d < n$ are enough to obtain a `reasonably small' hedging error? In case of Neuberger's formula for the variance swap -- where infinitely many European options reduce the hedging error to zero -- how good is using $12$, $6$ or even just $3$ options? Beyond that, which $3$ options should one select from, say, $30$ that are available in the market? It turns out that this problem of finding a \emph{sparse semi-static hedging} strategy is closely related to the well-known problem of \emph{variable selection} in high-dimensional regression, cf.\@ \cite[Sec.~3.3]{hastie2013elements}, and more generally to sparse modelling approaches in statistics and machine learning.\footnote{These connections to linear regression should not come as a surprise: It has been noted already in \cite{follmer1988hedging} that variance-optimal hedging in discrete time is equivalent to a sequential linear regression problem.}
 Indeed, to solve the problem of optimal selection we will draw from methods developed in statistics, such as the LASSO, greedy forward selection and the method of Leaps-and-Bounds.

Finally, with the goal of a numerical implementation of sparse semi-static hedging in mind, we have to find tractable methods to compute variances and covariances of hedging errors, expressed mathematically as residuals in the Galtchouk-Kunita-Watanabe (GKW) martingale decomposition. Here, we build on results from \cite{KalP07} which allow to calculate the GKW-decomposition `semi-analytically', i.e.\@ in terms of Fourier integrals, in several models of interest, such as the Heston model. The results of \cite{KalP07}, which focus on calculation of the strategy and the hedging error in a classic variance-optimal hedging framwork, are however not sufficient for the \emph{semi-static} hedging problem and we draw from some extensions that are developed in the technical companion paper \cite{ditella2017variance}.

We conclude in Section~\ref{sec:numerics} with a detailed numerical example, implementing the sparse semi-static hedging problem for hedging of a variance swap with put and call options in the Heston model. In particular, we compare the performance of the different solution methods for the subset selection problem, analyze the dependency of optimal hedging portfolio and hedging error on the number $d$ of static hedging assets, and study the influence of the leverage parameter $\rho$ on the optimal solution.

\section{Variance-optimal semi-static hedging}\label{sec:varo}

To set up our model for the financial market, we fix a complete probability space  $(\Om,\cF,\PP)$ equipped with a filtration $\FF$ satisfying the usual conditions. We fix a time horizon $T>0$, assume that $\cF_0$ is the trivial $\sig$-algebra and set $\cF=\cF_T$. We also assume that a state-price density $\frac{\rmd\QQ}{\rmd\PP}$ is given, which uniquely specifies a risk-neutral pricing measure $\QQ$. All expectations $\E{.}$ denote expectations under this risk-neutral measure $\QQ$. We denote by  $S = (S_t)_{t \geq 0}$ the price process of a traded asset, set interest rates to zero to simplify the exposition of results, and assume that $S$ is a continuous square-integrable martingale under $\QQ$. More generally, we denote by $\cH\p{\,2}=\cH\p{\,2}(\FF)$ the set of real-valued $\FF$-adapted square integrable $\QQ$-martingales, which becomes a Hilbert space when equipped with the norm $\|X\|_{\cH\p{\,2}}^2:=\EE\big[X_T^2\big]$. We also set $\cH\p{\,2}_0 := \set{X \in \cH^2: X_0 = 0}$.

\subsection{Variance-optimal hedging}
Before discussing semi-static hedging, we quickly review variance-optimal hedging of a claim $H\p{\,0}$ in $L\p2(\Om,\cF,\QQ)$, up to the time horizon $T > 0$, as discussed e.g. in \cite{FS86}. We identify the claim $H^0$ with the martingale
\[H_t^0 = \Econd{H^0}{\cF_t}, \qquad t \in [0,T]\]
which is an element of $\cH^2$. The set of all admissible dynamic strategies is denoted by 
\[
\Lrm\p2(S):=\set{\vartheta\textnormal{ predictable and $\RR$-valued: } \E{\int_0^T |\vartheta_t|^2 \rmd\aPP{S}{S}_t}<+\infty},
\] 
where $\aPP{S}{S}$ denotes, as usual, the predictable quadratic variation of $S$. The variance-optimal hedge $\vartheta$ with initial capital $c$ of the claim $H^0$ is the solution of 
\begin{equation}\label{eq:var_opt}
\epsilon^2 = \min_{\vartheta \in \Lrm^2(S), c \in \RR}\EE\left[\left(c+\int_0^T \vartheta_t \rmd S_t -H_T^0\right)^2\right].    
\end{equation}
The resulting quantity $\epsilon$ is the minimal hedging error. The minimization problem \eqref{eq:var_opt} can be interpreted as orthogonal projection (in $\cH\p{\,2}$) of the claim $H^0$ onto the closed subspace spanned by deterministic constants (corresponding to the initial capital $c$) and by $\cL\p2(S):=\{\int_0^T \vartheta_t \rmd S_t,\ \vartheta\in\Lrm\p2(S)\} \subset \cH^2_0$, the set of claims attainable with strategies from $\Lrm\p2(S)$. The resulting orthogonal decomposition
\begin{equation}\label{eq:gkw_dec}
H_t^0=c + \int_0^t \vartheta_s \rmd S_s + L_t\,,
\end{equation}
of $H^0$ is known as the Galtchouk-Kunita-Watanabe (GKW-)decomposition of $H^0$ with respect to $S$, cf.\@ \cite{kunita1967square, ansel1993decomposition}. From the financial mathematics perspective, \eqref{eq:gkw_dec} decomposes the claim $H^0$ into initial capital, hedgable risk, and unhedgable residual risk. 

The orthogonality of $L$ to $\cL\p2(S)$ in the Hilbert space sense implies orthogonality of $L$ to $S$ in the martingale sense, i.e.\@ it holds that $\aPP{L}{S}=0$. Hence, the variance-optimal strategy $\vartheta$ can be computed from \eqref{eq:gkw_dec} as
\begin{equation}\label{eq:gkw_dec.int}
\aPP{H}{S}_t =\int_0^t \vartheta_s\,\rmd \aPP{S}{S}_s\,,
\end{equation}
and $\vartheta$ can be expressed as the Radon-Nikodym derivative $\vartheta=\rmd\aPP{H}{S}\slash\rmd\aPP{S}{S}$.

\subsection{The variance-optimal semi-static hedging problem}

We are now prepared to discuss the variance-optimal \emph{semi-static} hedging problem and its solution. In addition to the contingent claim $H^0$ which is to be hedged, denote by $H=(H^1, \dotsc, H^n)\p\top$ the vector of supplementary contingent claims, all assumed to be square-integrable random variables in $L\p2(\Om,\cF,\QQ)$. Again, we associate to each $H^i$ the martingale
\begin{equation}\label{eq:ass.mart}
H^i_t := \Econd{H^i}{\cF_t},\qquad t\in[0,T],\quad i=0,\ldots,n.
\end{equation} 
The static part of the strategy can be represented by an element $v$ of $\RR^n$, where $v_i$ represents the quantity of claim $H^i$ bought at time $t = 0$ and held until time $t = T$. The dynamic part $\vartheta$ of the strategy is again represented by an element of $\Lrm\p2(S)$. 
\begin{definition}[Variance-Optimal Semi-Static Hedging Problem]
The variance-optimal semi-static hedge $(\vartheta, v) \in \Lrm^2(S) \times \RR^n$ and the optimal initial capital $c \in \RR$ are the solution of the minimization problem 
\begin{equation}\label{eq:setup}
\epsilon^2 = \min_{(\vartheta, v) \in \Lrm^2(S) \times \RR^n, c \in \RR}\EE\left[\Big(c-v^\top\E{H_T} + \int_0^T \vartheta_t \rmd S_t -(H^0_T-v^\top H_T)\Big)^2\right].    
\end{equation}
\end{definition}
Note that $v^\top\E{H_T}$ is the cost of setting up the static part of the hedge and its terminal value is $v^\top H_T$. The dynamic part is self-financing and results in the terminal value $\int_0^T \vartheta_t \rmd S_t$. Adding the initial capital $c$ and subtracting the target claim $H^0_T$ yields the above expression for the hedging problem.

To solve the variance-optimal semi-static hedging problem, we decompose it into an inner and an outer minimization problem and rewrite \eqref{eq:setup} as
\begin{equation}\label{eq:inner_outer}
\begin{cases}
\epsilon^2(v) = \min_{\vartheta \in \Lrm^2(S), c \in \RR}\EE\left[\Big(c-v^\top\E{H_T}+\int_0^T \vartheta_t \rmd S_t - (H^0_T-v^\top H_T)\Big)^2\right], & \quad \text{(inner prob.)}\\    
\epsilon^2 = \min_{v \in \RR^n} \epsilon(v)^2. &\quad \text{(outer prob.)}\\
\end{cases}
\end{equation}
The inner problem is of the same form as \eqref{eq:var_opt}, while the outer problem turns out to be a finite dimensional quadratic optimization problem. To formulate the solution, we write the GWK-decompositions of the claims ($H^0, \dotsc, H^n)$ with respect to $S$ as
\begin{equation}\label{eq:gkw_dec.H}
H\p{\,i}_t= H\p{\,i}_{\,0}+ \int_0^t \vartheta_s^i \rmd S_s + L\p{\,i}_t, \quad i =0, \dots, n.
\end{equation}
Similarly to \eqref{eq:gkw_dec.int} we get
\begin{equation}\label{eq:int.gkw.dec.H}
\vartheta\p{\,i} = \frac{\rmd\aPP{H\p{\,i}}{S}}{\rmd{\aPP{S}{S}}}\,,\quad i=0,\ldots,n.
\end{equation}
and introduce the vector notation $\vartheta:=(\vartheta\p{\,1},\ldots,\vartheta\p{\,n})\p\top$ for the strategies and $L:=(L\p{\,1},\ldots,L\p{\,n})\p\top$ for the residuals in the GKW-decomposition. Finally we formulate the following condition:
\begin{definition}[Non-redundancy condition]
The supplementary claims $H = (H^1, \dots, H^n)\p\top$ satisfy the \emph{non-redundancy condition} if there is no $x \in \RR^n \setminus \set{0}$ with $x^\top L_T = 0$, a.s. 
\end{definition}
Intuitively, existence of a non-zero $x$ with $x^\top L_T = 0$, means that the number of supplementary assets can be reduced without changing the hedging error $\epsilon^2$ in \eqref{eq:setup}. We are now prepared to state our main result on the solution of the variance-optimal semi-static hedging problem:

\begin{theorem}\label{thm:semistatic} Consider the variance-optimal semi-static hedging problem \eqref{eq:setup} and set
\begin{align}\label{not:var.cov}
A:=\Var[L_{\,T}\p{\,0}],\qquad B:=\Cov[L_{\,T}, L_{\,T}\p{\,0}],\qquad C:=\Cov[L_{\,T}, L_{\,T}]\,.
\end{align}
Under the non-redundancy condition, $C$ is invertible and the unique solution of the semi-static hedging problem is given by
\[c =\E{H^0_T}, \qquad v = C^{-1} B,\qquad \vartheta^v = \vartheta^0 - v^\top \vartheta.\]
The minimal squared hedging error is given by
\[\varepsilon^2 = A - B^\top C^{-1} B.\]
Moreover, the elements of $A$, $B$ and $C$ can be expressed as
\begin{equation}
\label{eq:ex.var.cov}
\EE\big[L_{\,T}^i L_{\,T}^j\big] = \E{\scal{L^i}{L^j}_T} = \E{\scal{H^i}{H^j}_T - \int_0^T  \vartheta_t^i \vartheta_t^j \, \rmd\scal{S}{S}_t},\quad i,j=0,\dotsc,n\,.
\end{equation}
\end{theorem}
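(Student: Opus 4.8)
The plan is to solve \eqref{eq:setup} exactly along the inner--outer split \eqref{eq:inner_outer}, and then to establish the covariation identity \eqref{eq:ex.var.cov} as a separate, purely computational step. \emph{Inner problem.} Fix $v\in\RR^n$. Since $\cF_0$ is trivial, $\E{H_T}=H_0$, so $\tilde c:=c-v^\top\E{H_T}$ ranges over all of $\RR$, and the inner problem is precisely the one-asset variance-optimal hedging problem \eqref{eq:var_opt} for the claim $\tilde H:=H^0-v^\top H\in L^2(\Om,\cF,\QQ)$. Because every $H^i$ is square-integrable, \eqref{eq:gkw_dec.H} holds with $\vartheta^i\in\Lrm^2(S)$ and $L^i\in\cH^2_0$, so by linearity the GKW-decomposition of $\tilde H$ reads $\tilde H_t=\tilde H_0+\int_0^t(\vartheta^0_s-v^\top\vartheta_s)\,\rmd S_s+(L^0_t-v^\top L_t)$ with $\vartheta^0-v^\top\vartheta\in\Lrm^2(S)$ and $L^0-v^\top L\in\cH^2_0$. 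Reading \eqref{eq:var_opt} as orthogonal projection in $\cH^2$ onto $\RR\oplus\cL^2(S)$, this decomposition \emph{is} that projection, so the inner problem has the unique minimizer $\tilde c=\tilde H_0$, equivalently $c=\E{H^0_T}$ independently of $v$, and $\vartheta^v=\vartheta^0-v^\top\vartheta$, with residual $L^0_T-v^\top L_T$; hence $\epsilon^2(v)=\E{(L^0_T-v^\top L_T)^2}$.

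\emph{Outer problem.} Each $L^i$ is a martingale null at $0$, so $\E{L^i_T}=0$ and therefore $A=\E{(L^0_T)^2}$, $B=\E{L_T L^0_T}$, $C=\E{L_TL_T^\top}$; expanding the square gives the quadratic form $\epsilon^2(v)=A-2v^\top B+v^\top Cv$ on $\RR^n$. The Gram matrix $C$ is positive semidefinite, and if it were singular there would be $x\neq0$ with $x^\top Cx=\Var[x^\top L_T]=0$, forcing $x^\top L_T=\E{x^\top L_T}=0$ a.s.\@ and contradicting the non-redundancy condition; hence $C$ is positive definite, so invertible, $v\mapsto\epsilon^2(v)$ is strictly convex, and its unique minimizer is $v=C^{-1}B$ with minimum $A-2B^\top C^{-1}B+B^\top C^{-1}B=A-B^\top C^{-1}B$. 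Uniqueness for the full problem follows by composing uniqueness of the inner projection with strict convexity in $v$.

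\emph{The covariation identity.} Since $L^i,L^j\in\cH^2$, the process $L^iL^j-\scal{L^i}{L^j}$ is a uniformly integrable martingale null at $0$, whence $\E{L^i_TL^j_T}=\E{\scal{L^i}{L^j}_T}$. To evaluate $\scal{L^i}{L^j}$, substitute $L^j=H^j-H^j_0-\int_0^\cdot\vartheta^j_s\,\rmd S_s$ and use bilinearity of the covariation together with the GKW-orthogonality $\scal{L^i}{S}=0$: the $S$-integral term contributes $\int\vartheta^j\,\rmd\scal{L^i}{S}=0$, so $\scal{L^i}{L^j}=\scal{L^i}{H^j}$. Substituting $L^i=H^i-H^i_0-\int_0^\cdot\vartheta^i_s\,\rmd S_s$ and inserting $\scal{S}{H^j}=\int\vartheta^j\,\rmd\scal{S}{S}$ from \eqref{eq:int.gkw.dec.H} gives $\scal{L^i}{H^j}=\scal{H^i}{H^j}-\int\vartheta^i\vartheta^j\,\rmd\scal{S}{S}$; evaluating at $T$ and taking expectations (Cauchy--Schwarz and $\vartheta^i,\vartheta^j\in\Lrm^2(S)$ ensure integrability) yields \eqref{eq:ex.var.cov}.

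\emph{Main obstacle.} The argument is a Hilbert-space projection followed by a finite-dimensional least-squares step, so the substantive points are: (a) that square-integrability of the $H^i$ makes the GKW-decomposition linear and keeps linear combinations inside $\Lrm^2(S)\times\cH^2_0$, which legitimizes the reduction of the inner problem; and (b) the equivalence between the non-redundancy condition and positive-definiteness of $C$, which is what upgrades ``$C$ is a Gram matrix'' to genuine invertibility and to uniqueness of $v$. I expect (b) to be the conceptual crux; the rest is bookkeeping.
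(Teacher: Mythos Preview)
Your proof is correct and follows essentially the same route as the paper: the inner--outer split, linearity of the GKW-decomposition to identify $\vartheta^v$ and $L^v$, reduction of the outer problem to the quadratic form $v^\top Cv - 2v^\top B + A$, and the observation that non-redundancy forces $C$ to be positive definite. The only difference is that you also spell out the derivation of the covariation identity \eqref{eq:ex.var.cov}, which the paper's proof leaves implicit; your argument via $\scal{L^i}{S}=0$ and two substitutions is the natural one.
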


\begin{corollary}\label{cor:semistatic}
If the non-redundancy condition does not hold true, then any solution $v \in \RR^n$ of the linear system $Cv = B$, together with $c =\E{H^0_T}$ and $\vartheta^v = \vartheta^0 - v^\top \vartheta$ is a solution of the semi-static hedging problem. The solution set is never empty, and the solution which minimizes the Euclidian norm of $v$ can be obtained by setting $v = C^\dagger B$, where $C^\dagger$ denotes the Moore-Penrose pseudo-inverse of $C$. 
\end{corollary}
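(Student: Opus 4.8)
The strategy is to reduce the degenerate case to the non-degenerate one by passing to a quotient that removes the redundant directions, and then to translate the resulting freedom in the solution back to the language of the pseudo-inverse. Throughout I treat the outer problem \eqref{eq:inner_outer} as minimization of the quadratic form $f(v) = A - 2 v^\top B + v^\top C v$, which is exactly the expression for $\ep^2(v)$ obtained by combining the inner-problem solution (of the form \eqref{eq:var_opt}) with the covariance notation \eqref{not:var.cov}; this quadratic-form identity was established in the proof of Theorem~\ref{thm:semistatic} and does not rely on non-redundancy, so I may reuse it here.

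First I would record that $C = \Cov[L_T, L_T]$ is symmetric positive semi-definite, being a covariance matrix; hence $f$ is a convex quadratic and its minimizers are precisely the stationary points, i.e.\@ the solutions of the normal equation $C v = B$. Convexity guarantees that \emph{any} such $v$ attains the global minimum, which proves that every solution of $Cv = B$ yields a solution of the semi-static hedging problem once paired with $c = \E{H^0_T}$ and $\vartheta^v = \vartheta^0 - v^\top \vartheta$; the accompanying optimal strategy and initial capital are inherited verbatim from the inner problem as in Theorem~\ref{thm:semistatic}. The main obstacle is therefore not optimality but \emph{solvability}: I must show the linear system $Cv = B$ is consistent even when $C$ is singular. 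The clean way is to verify the range condition $B \in \mathrm{range}(C)$. Writing $L_T$ as a random vector, the kernel of $C$ consists exactly of those $x$ with $\Var[x^\top L_T] = 0$, equivalently $x^\top L_T = x^\top \E{L_T} = 0$ a.s.; since $H^i_0 = \E{H^i_T}$ forces $\E{L_T^i}=0$ in the GKW-decomposition \eqref{eq:gkw_dec.H}, this reads $x^\top L_T = 0$ a.s. For such $x$ one computes $x^\top B = \Cov[x^\top L_T, L_T^0] = 0$, so $B$ is orthogonal to $\ker C$. Because $C$ is symmetric, $\mathrm{range}(C) = (\ker C)^\perp$, and we conclude $B \in \mathrm{range}(C)$, establishing consistency and the nonemptiness of the solution set.

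For the final assertion on the pseudo-inverse, I would invoke the standard characterization that among all solutions of a consistent linear system $Cv = B$, the vector $C^\dagger B$ is the one of minimal Euclidean norm. Concretely, $v_0 := C^\dagger B$ solves the system precisely because consistency, i.e.\@ $B \in \mathrm{range}(C)$, gives $C C^\dagger B = B$; and the general solution is $v_0 + \ker C$, on which $\|v\|^2 = \|v_0\|^2 + \|v - v_0\|^2$ since $v_0 = C^\dagger B \in \mathrm{range}(C^\dagger) = \mathrm{range}(C^\top) = (\ker C)^\perp$ is orthogonal to the translate $v - v_0 \in \ker C$. Hence $\|v\|$ is minimized uniquely at $v = v_0$, which is the claimed statement. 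I expect the range-inclusion step $B \in \mathrm{range}(C)$ to be the crux, as it is what makes the whole corollary go through; the remaining arguments are standard convex-optimization and linear-algebra facts about pseudo-inverses that I would cite rather than reprove.
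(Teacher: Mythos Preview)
Your proposal is correct and follows essentially the same route as the paper: both reduce the outer problem to the normal equation $Cv=B$, and both establish solvability by showing $B\in(\ker C)^\perp$ via the observation that $x\in\ker C$ forces $x^\top L_T=0$ a.s., whence $x^\top B=\Cov(x^\top L_T,L_T^0)=0$. You are slightly more careful in noting explicitly that $\E{L_T}=0$ (needed to go from $\Var(x^\top L_T)=0$ to $x^\top L_T=0$) and in spelling out the minimum-norm property of $C^\dagger B$, which the paper simply states.
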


Notice that the minimal squared hedging error $\varepsilon^2 $ in Theorem~\ref{thm:semistatic} is the Schur complement of the block $C$ in the `extended covariance matrix'
\[
\Cov[(L_T^0, L_{\,T}), (L_T^0, L_{\,T})] = \begin{bmatrix}  A& B^\top \\ B &C  \end{bmatrix}.
\]
In particular, if $(L_T^0, L_T)$ has normal distribution, then $\varepsilon^2$ can be expressed as $\varepsilon^2 = \Var\left[\left.L_{\,T}\p{\,0}\right|L_T\right]$.

\begin{proof}[Proof of Theorem~\ref{thm:semistatic} and Corollary~\ref{cor:semistatic}]
First, we consider the inner minimization problem in \eqref{eq:inner_outer}. This problem is equivalent to the variance-optimal hedging problem for the claim $H^v := H^0 - v^\top H$. The solution $\vartheta^\nu$ is given by the GKW-decomposition
\begin{equation}\label{eq:gkw_dec.lin}
(H^0_t-v^\top H_t) = (H^0_0 - v^\top H_0) + \int_0^t \vartheta_s^v \rmd S_s + L^v_t, \quad t \in [0,T]
\end{equation}
of the martingale $(H\p{\,0}-v\p\top H)$ with respect to $S$. By \eqref{eq:gkw_dec.int} we obtain
\[\vartheta^v_t  = \frac{\rmd\aPP{(H^0 -v^ \top H)}{S}_t}{\rmd\aPP{S}{S}_t} = \vartheta^0_t - v^\top \vartheta_t,\]
using the bilinearity of the predictable quadratic covariation. Uniqueness of the GKW-decomposition yields
$L^v_t= L^0_t - v^\top L_t$ and the squared hedging error is given by
\begin{align*}
\varepsilon(v)\p{\,2} &= \EE\big[(L\p{\,v})\p2\big] = \EE\big[(L\p{\,0}_{\,T} - v\p\top{L}_{\,T})\p2\big]  
= v^\top\EE\big[L_{\,T} L_{\,T}\p\top\big] v - 2 v\p\top  \EE\big[L_{\,T} L_{\,T}\p{\,0}\big] + \EE\big[(L_{\,T}\p{\,0})\p2]  = \\
&= v^\top C v - 2v^\top B + A.
\end{align*}
Thus, the outer optimization problem in \eqref{eq:inner_outer} becomes
\begin{equation}\label{eq:outer_ABC}
\varepsilon^2 =  \min_{v \in \RR^n} \Big(v^\top C v - 2v^\top B + A\Big).
\end{equation}
Since $C$ is positive semi-definite, the first order-condition $Cv = B$ is necessary and sufficient for optimality of $v$. Under the non-redundancy condition, $\Var(x^\top L_T) > 0$ for any $x \in \RR^n \setminus \set{0}$, hence $C$ is positive definite and in particular invertible. The unique solution of the outer problem is therefore given by $v = C^{-1}B$, completing the proof of Theorem~\ref{thm:semistatic}.

For the corollary, it remains to show that $Cv = B$ has a solution, even when the non-redundancy condition does not hold. A solution exists, if $B$ is in the range of $C$, or equivalently, if $B$ is in $(\ker\,C)^\bot$. By assumption $\ker\,C$ is non-empty, and we can choose come $x \in \ker\,C$, i.e.\@ with $x^\top C = 0$. Since $C$ is the covariance matrix of $L_T$ is follows that $x^\top L_T = 0$, a.s. This implies that also $x^\top B = \Cov(x^\top L_T,L_T^0) = 0$, for all $x \in \ker\,C$ and hence that $B \in (\ker\,C)^\bot$. 
\end{proof}

Finally, we compute the \emph{hedge contribution} of a single supplementary asset $H^{n+1}$. By hedge contribution, we mean the reduction in squared hedging error that is achieved by adding the asset $H^{n+1}$ to a given pool of supplementary assets $(H^1, \dotsc, H^n)$. We denote by $\varepsilon_n\p2$ and $\varepsilon_{n+1}\p 2$ the minimal hedging error achieved with supplementary assets $(H^1, \dotsc, H^n)$ and $(H^1, \dotsc, H^{n+1})$ respectively. 

\begin{proposition}[Relative Hedge Contribution]\label{prop:hedge_contribution}
Suppose that the non-redundancy condition holds true for all supplementary assets $H^1, \dotsc, H^{n+1}$. Then the relative hedge contribution $RHC_{n+1}$ of $H^{n+1}$ is given by
\begin{equation}\label{eq:est.err.pl.one}
RHC^2_{n+1} := \frac{\varepsilon_{n}\p 2- \varepsilon_{n+1}\p 2}{\varepsilon_{n}\p 2} =  \frac{\Big(\Cov[L^{n+1}_T,L^0_T] - K^\top C^{-1} B\Big)^2}{\left(\Var[L^{n+1}_T]- K^\top C^{-1} K\right)\left(A - B^\top C^{-1} B\right)} \in [0,1],
\end{equation}
where $K \in \RR^n$ with $K_i = \Cov(L_T^i, L_T^{n+1})$ for $i=1,\dotsc,n$. 
\end{proposition}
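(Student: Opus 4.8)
The plan is to exploit the fact that the Galtchouk--Kunita--Watanabe residuals $L^0,L^1,\dots,L^{n+1}$ are intrinsic to the claims $H^0,\dots,H^{n+1}$ and the traded asset $S$: they do not depend on which subset of supplementary claims is actually used for hedging. Hence passing from the pool $(H^1,\dots,H^n)$ to $(H^1,\dots,H^{n+1})$ merely enlarges the matrices of Theorem~\ref{thm:semistatic}. Writing $b:=\Cov[L^{n+1}_T,L^0_T]$ and $d:=\Var[L^{n+1}_T]$, the quantities $B$ and $C$ get replaced by
\[
\tilde B=\begin{bmatrix} B\\ b\end{bmatrix},\qquad \tilde C=\begin{bmatrix} C & K\\ K^\top & d\end{bmatrix},
\]
while $A=\Var[L^0_T]$ is unchanged. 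First I would note that the non-redundancy condition for $H^1,\dots,H^{n+1}$ forces $\tilde C\succ0$, and restricting to the first $n$ coordinates also $C\succ0$, so Theorem~\ref{thm:semistatic} applies to both the $n$- and the $(n+1)$-asset problem and gives $\varepsilon_n^2=A-B^\top C^{-1}B$ and $\varepsilon_{n+1}^2=A-\tilde B^\top\tilde C^{-1}\tilde B$ (with $\varepsilon_n^2>0$ assumed, so that the ratio defining $RHC_{n+1}^2$ makes sense).

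The second step is a Schur-complement computation. Let $s:=d-K^\top C^{-1}K$ be the Schur complement of $C$ in $\tilde C$; since $C\succ0$ and $\tilde C\succ0$, we have $s>0$. Plugging the standard block-inversion formula for $\tilde C^{-1}$ into $\tilde B^\top\tilde C^{-1}\tilde B$ and using that the scalar $K^\top C^{-1}B$ equals $B^\top C^{-1}K$, the cross terms collect into the perfect square $(K^\top C^{-1}B)^2-2b\,K^\top C^{-1}B+b^2$, yielding
\[
\tilde B^\top\tilde C^{-1}\tilde B = B^\top C^{-1}B+\frac{\bigl(b-K^\top C^{-1}B\bigr)^2}{s}.
\]
Subtracting gives $\varepsilon_n^2-\varepsilon_{n+1}^2=\bigl(b-K^\top C^{-1}B\bigr)^2/s$, and dividing by $\varepsilon_n^2=A-B^\top C^{-1}B$ produces exactly formula~\eqref{eq:est.err.pl.one}.

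For the membership in $[0,1]$ I would argue structurally rather than from the explicit formula: since $RHC_{n+1}^2=1-\varepsilon_{n+1}^2/\varepsilon_n^2$, it suffices to show $0\le\varepsilon_{n+1}^2\le\varepsilon_n^2$. The lower bound is immediate because $\varepsilon_{n+1}^2$ is an infimum of expectations of squares (equivalently, it is the Schur complement of $\tilde C$ in the positive semi-definite matrix $\Cov[(L^0_T,L_T,L^{n+1}_T)]$, hence nonnegative). The upper bound holds because any $v\in\RR^n$ embeds into $\RR^{n+1}$ by appending a zero coordinate, so enlarging the pool can only lower the optimal value of the outer problem~\eqref{eq:outer_ABC}. (Alternatively, both bounds are visible directly: in the explicit formula the numerator is a square and both factors in the denominator, $s>0$ and $\varepsilon_n^2>0$, are positive, giving $RHC_{n+1}^2\ge0$, and $1-RHC_{n+1}^2=\varepsilon_{n+1}^2/\varepsilon_n^2\ge0$.)

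The only genuinely technical point is the block-matrix algebra in the second step, and it is routine once one recognizes the perfect-square structure. The conceptual point that needs care is the observation underlying the whole argument: the residuals $L^i$ — and therefore $A$, $B$, $C$ — do not change when a new supplementary asset is appended, which is precisely what legitimizes treating $\varepsilon_{n+1}^2$ as a nested refinement of $\varepsilon_n^2$.
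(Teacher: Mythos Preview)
Your proof is correct and follows essentially the same route as the paper: introduce the enlarged vector $\tilde B$ and matrix $\tilde C$, invoke Theorem~\ref{thm:semistatic} for both the $n$- and $(n+1)$-asset problems, and compute $\tilde B^\top\tilde C^{-1}\tilde B$ via the Schur-complement block-inversion formula; the paper simply writes ``further algebraic manipulation yields \eqref{eq:est.err.pl.one}'' where you spell out the perfect-square collapse. Your treatment is in fact slightly more complete, since you justify the claim $RHC_{n+1}^2\in[0,1]$ (which the paper states but does not argue) and you flag the implicit hypothesis $\varepsilon_n^2>0$ needed for the ratio to be defined.
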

\begin{remark}
The expression for the relative hedge contribution has an intuitive interpretation under the assumption that the residuals $(L_T^0, \dotsc, L_T^{n+1})$ have multivariate normal distribution. In this case the hedge contribution of $H^{n+1}$ is equal to the \emph{partial correlation} $\Cor(L_T^0, L_T^{n+1} | L_T)$ of $L_T^0$ and $L_T^{n+1}$, given $L_T$, cf.~\cite[Ch.~5.3]{muirhead2009aspects}. Thus, roughly speaking, a supplementary asset has a high hedge contribution, if it is strongly correlated with $H^0$, even after conditioning on all claims that are attainable with semi-static strategies in $S$ and $(H^1, \dots, H^n)$.
\end{remark}
\begin{proof}
We set $B^{\,\textnormal{new}}:=[B\p\top, \Cov[L_T\p{n+1},L_T\p0]]\p\top$;  $C^{\,\textnormal{new}}:=(C^{\,\textnormal{new}}_{i,j})_{i,j=1,\ldots,n+1}$, where $C^{\,\textnormal{new}}_{i,j}:=\Cov[L\p i_T,L\p j_T]$, $i,j=1,\ldots,n+1$. Then we have
\[
C^{\,\textnormal{new}} = \begin{bmatrix} C & K \\ K^\top & \Var[L^{n+1}_T] \end{bmatrix} \,
\]
which is invertible due to the non-redundancy condition. Write $M = (\Var[L^{n+1}_T]-K\p\top C\p{\,-1} K)$ for the Schur complement of $C$ in $C^{\,\textnormal{new}}$. Using the Schur complement, the inverse of the block matrix $C^{\,\textnormal{new}}$ can be written as
\[
(C^{\,\textnormal{new}})^{\,-1} =   \begin{bmatrix} C^{\,-1} +  C^{\,-1} K K^\top C^{\,-1}M^{\,-1} & -C^{\,-1} K M^{\,-1} \\ -K^\top C^{\,-1}M^{\,-1} & M^{\,-1} \end{bmatrix} ,
\]
cf.~\cite{horn2012matrix}, and applying Theorem~\ref{thm:semistatic} yields $\varepsilon_{n+1}(u)\p2=A-(B^{\,\textnormal{new}})\p\top(C^{\,\textnormal{new}})^{\,-1}B^{\,\textnormal{new}}$. Consequently
\[
\varepsilon_d(u)^2 - \varepsilon_{n+1}(u)^2 = (B^\textnormal{new})^\top (C^\textnormal{new})^{-1} B^\textnormal{new} - B^\top C^{-1} B\]
and further algebraic manipulation yields \eqref{eq:est.err.pl.one}.
\end{proof}

\subsection{The variance swap and long/short constraints} \label{sub:swap}
We review the semi-static hedging of a variance swap with an infinite pool of European put- and call-options, as discussed in \cite{neuberger1994log, CM98}. We will apply the methods developed in this paper to this hedging problem in section~\ref{sec:numerics}. It also serves as a motivation to add long/short constraints to the semi-static hedging problem.

Recall that a \emph{variance swap} is a contingent claim on an underlying traded asset $S$, which at maturity $T$ pays an amount $H^\text{swap}_T:=[\log S,\log S]_T-k$, where $k\in\RR$. Usually, $k$ is chosen such that the value of the contract is zero at inception, and the corresponding value $k_* = \E{[\log S, \log S]_T}$ is called the \emph{swap rate}. Recall that our only assumption on the discounted price process $S$ is that it is a square-integrable strictly positive continuous martingale. Applying It\^o's formula to $\log S_T$ we get
\begin{equation}\label{eq:it.log}
\log S_T=\log S_0 + \int_0 ^T \frac{1}{S_t} \,\rmd S_t -\frac{1}{2} \int_0^T \frac{1}{S_t^2} \rmd[S,S]_t
\end{equation}
 Hence
\[H^\text{swap}_T = \int_0^T \frac{1}{S_t^2} \rmd[S,S]_t - k = 2  \int_0^T \frac{1}{S_t} \,\rmd S_t - 2 \log\frac{S_T}{S_0} - k
\]  
that is, to hedge the variance swap, it is enough to dynamically trade in the stock $S$ and enter a static position in the `log-contract' with payoff $\log\frac{S_{\,T}}{S_{\,0}}$, cf.\@ \cite{neuberger1994log}. Furthermore,  from \cite{CM98}, we have
\begin{equation}\label{eq:rep.log.con}
\log \frac{S_{\,T}}{S_{\,0}}=\frac{S_{\,T}-S_{\,0}}{S_{\,0}}-\int_0\p {S_{\,0}}\frac{(K - S_{\,T})\p+}{K\p{\,2}}\,\rmd  K- \int_{S_{\,0}}\p{\infty}\frac{(S_{\,T} - K)\p+}{K\p{\,2}}\,\rmd  K\,,
\end{equation}
which inserting into the above equation yields
\begin{equation}\label{eq:swap_replication}
H^\text{swap}_T=2 \int_0^T \left(\frac{1}{S_t} - \frac{1}{S_0}\right) \rmd S_t  - k + 2 \int_0\p {S_{\,0}}\frac{(K - S_{\,T})\p+}{K\p{\,2}}\,\rmd  K + 2 \int_{S_{\,0}}\p{\infty}\frac{(S_{\,T} - K)\p+}{K\p{\,2}}\,\rmd  K.
\end{equation}
  This equality can be interpreted as a semi-static replication strategy for the variance swap, which uses a dynamic position in $S$ and a static portfolio of \emph{infinitesimally small} positions in an \emph{infinite number} of out-of-the-money puts and out-of-the-money calls. We make several observations:
\begin{enumerate}[(a)]
\item For any practical implementation the `infinitesimal portfolio' has to be discretized and portfolio weights have to be assigned to each put and call.
\item Since they are calls and puts on the same underlying asset, the static hedging assets are highly correlated. 
\item The static positions in puts and calls are long positions only.
\end{enumerate}
To address point (a) different ad-hoc discretizations of the integrals in \eqref{eq:swap_replication} are possible (e.g. left or right Riemann sums, trapezoidal sums, etc.\@). However, it is not obvious which discretization is optimal in the sense of minimizing the hedging error. The choice of an optimal discretization in the variance-minimizing sense is precisely given by Theorem~\ref{thm:semistatic}.

Point (b) suggests that given a moderate number (say 30) of puts and calls as static hedging assets, many of them will be redundant in the sense that their hedge contribution (given the other supplementary assets) is small. This observation motivates the sparse approach of the next section and will be confirmed numerically in the application Section~\ref{sec:numerics}.

Point (c) finally motivates the addition of short/long constraints, or more generally, linear constraints of the type
\begin{equation}\label{eq:constraint}
v^\top p  \ge 0, 
\end{equation}
where $p \in \R^n$ is fixed, to the outer problem in \eqref{eq:inner_outer}. With these constraints, the outer problem is a linearly constrained quadratic optimization problem, which can still be efficiently solved by standard numerical software.

\section{Sparse semi-static hedging}\label{sec:sparse}
We now focus on the problem of optimal selection of static hedging assets, as outlined in the introduction and motivated in the previous section. Note that the subset selection only affects the static part of the strategy and hence only the outer problem in \eqref{eq:inner_outer}. Recall the $\ell_1$-norm $\norm{v}_1 = \sum_{i=1}^n |v_i|$ on $\RR^n$ and the (non-convex) $\ell_0$-quasinorm $\norm{v}_0$ which counts the number of non-zero elements of $v$, cf.\@ \cite{foucart2013mathematical}.
\begin{definition}[Sparse Variance-Optimal Semi-Static Hedging Problem]
The \emph{sparse variance-optimal semi-static hedge} $(\vartheta, v) \in \Lrm^2(S) \times \RR^n$ with effective portfolio size $d < n$ and its optimal initial capital $c \in \RR$ are the solution of the minimization problem 
\eqref{eq:inner_outer}, with the outer problem replaced by 
\begin{align}\label{eq:l0}
\epsilon^2 &= \min_{v \in \RR^n, v\geq 0} (v^\top C v  - 2v^\top B + A), \quad \text{subj. to}\quad \norm{v}_0 \le d.\qquad &&\text{($\ell_0$-constrained problem)}\\
\intertext{The $\ell_1$-relaxation of this problem is given by}
\epsilon^2 &= \min_{v \in \RR^n, v\geq 0} (v^\top C v  - 2v^\top B + A) + \lambda \norm{v}_1, \qquad &&\text{($\ell_1$-relaxation)}.\label{eq:l1}
\end{align}
where $\lambda > 0$ is a tuning parameter that replaces $d$. In both problems, we allow for long/short contains of the form \eqref{eq:constraint}.
\end{definition}

Of course, the minimization problem \eqref{eq:l0} is equivalent to the extensively studied  subset selection problem in linear regression and \eqref{eq:l1} to its convex relaxation in Lagrangian form, usually called LASSO. We refer to \cite{hastie2013elements} for a general overview and to \cite{tibshirani1996regression} for the LASSO. We emphasize that 
\begin{itemize}
\item The $\ell_0$-constrained subset selection problem \eqref{eq:l0} is non-convex and hard to solve exactly if the dimension $n$ is high.
\item The $\ell_1$-penalized minimization problem \eqref{eq:l1} is convex and efficient numerical solvers exist even for large $n$. Its solution is usually a good approximation to the exact subset selection problem, but no guarantee of being close to the solution of \eqref{eq:l0} can be given in general.\footnote{We remark that conditions for the perfect recovery of solutions of \eqref{eq:l0} by solving \eqref{eq:l1} can be given within the theoretical framework of compressive sensing, see e.g. \cite{foucart2013mathematical}.}
\end{itemize}

To illustrate the effect of the $\ell_1$-penalty, denote by $v_*$ the solution of the unpenalized hedging problem \eqref{eq:setup} and assume for a moment that all GKW-residuals $(L^0_T, \dotsc, L^d_T)$ are uncorrelated. This assumption is highly unrealistic in the hedging context, but leads to a simple form of the solution of the penalized problem, cf.\@ \cite{tibshirani1996regression}: It is given by $v' = \mathrm{sign}(v) (|v| - \lambda)^+$, i.e.\@ all static positions are shrunk towards zero by $\lambda$ and truncated when zero is reached. This nicely illustrates the sparsifying effect of the penalty and the role of $\lambda$.

While \eqref{eq:l1} is frequently used as a surrogate for \eqref{eq:l0}, the following alternatives exist for solving \eqref{eq:l0} directly, or for approximating its solution. Again, we refer to \cite{tibshirani1996regression} for further details on the described methods:
\begin{description}
\item[Brute-Force] Solve the quadratic optimization problem for each possible subset of cardinality $d$. Since there are $d \choose n$ of these subsets, this approach is usually not efficient and becomes completely infeasible for large $n$.
\item[Leaps-and-Bounds] `Leaps-and-Bounds' is a branch-and-bound algorithm introduced by \cite{furnival1974regressions} for subset selection in linear regression, which gives an exact solution to \eqref{eq:l0} without testing all possible subsets. 
\item[Greedy Forward Selection] A simple greedy approximation to \eqref{eq:l0} is to assume that the optimal subsets of different cardinality are nested. In the forward approach the problem \eqref{eq:l0} is first solved for $d=1$, which is easy. Then, iteratively, the supplementary claim with the largest relative hedge contribution (see \eqref{prop:hedge_contribution}) is added to the set of active static positions in each step. The same procedure could be used backwards (`greedy backward selection')  i.e.\@ starting with $d=n$ and then removing iteratively the supplementary claim with the smallest hedge contribution. In general, no guarantee of being close the exact solution of \eqref{eq:l0} can be given for these methods.
\end{description}

We will compare the practical performance of the different solution methods in Section~\ref{sec:numerics}.

\section{Stochastic volatility models with Fourier representation}
The final ingredient that is still missing for a numerical solution of the (sparse) semi-static hedging problem is an efficient method to compute the quantities $A$, $B$ and $C$ from Theorem~\ref{thm:semistatic}. One possible approach  would be to compute transition densities of $S$ and $H^0, \dotsc, H^n$ by Monte-Carlo-Simulation and to compute the GKW-decomposition by sequential backward regression, cf.\@ \cite{follmer1988hedging}. Due to the fact that the joint distribution of $S$ and all price processes of supplementary claims is needed, we expect a heavy computational load in order to obtain reasonably large accuracy with this method. An interesting alternative method has been suggested by \cite{KalP07} (see also \cite{hubalek2006variance, pauwels2007variance}) for the classic variance-optimal hedging problem \eqref{eq:var_opt} of European claims. This alternative is based on the well-known Fourier method for pricing of European claims, cf.\@ \cite{CM98, raible2000levy, KalP07}. 
\subsection{Fourier representation of strategies and hedging errors}

We stay close to the framework of \cite{KalP07} and assume that the payoff of some option $H$ is given by $H = f(X_T)$, where $X$ is the log-price process of the underlying stock, i.e.{} we also assume $S = \exp(X)$. The payoff of a call for example can be written as $f(x) = (e^x - K)^+$, but it is not necessary to restrict to this specific case. Furthermore, we assume that the (rescaled) two-sided Laplace transform 
\begin{equation}\label{eq:rep.fourier}
\tilde f(u)= \frac1{2\pi\,\rmi}\,\int_{-\infty}\p{+\infty}\exp(-\,u x) f(x)\rmd x
\end{equation}  
of the payoff exists at some $u = R \in \RR$ and is integrable over the strip 
\[\cS(R) :=\{u\in\CC: \Re u = R\}\] in the complex plane. If the integrability condition $\E{e^{R X_T}} < \infty$ holds, then the risk-neutral price of the claim $H$ at time $t \in [0,T]$ can be recovered by the Fourier-type integral
\begin{equation}\label{eq:fourier}
H_t=\int_{\cS(R)} H_t(u) \tilde f(u) \rmd  u,
\end{equation}
along $\cS(R)$, where we denote the conditional moment generating function (analytically extended to the complex plane) of $X_T$ by 
\[H_t(u) := \Econd{e^{u X_T}}{\cF_t}.\]
Note that $H_t(u)$ is well-defined on $\cS(R)$ due to the integrability condition imposed on $X_T$. In the important cases of European puts and calls, the two-sided Laplace transform $\tilde f$ is given by
\[\tilde f(u) = \frac{1}{2 \pi\,\rmi} \frac{K^{1-u}}{u(u-1)},\]
with $R > 1$ for calls and $R < 0$ for puts, cf.\@ \cite[Sec.~4]{hubalek2006variance}.

The key insight, pioneered by \cite{hubalek2006variance} for variance-optimal hedging in models with independent increments and by \cite{KalP07, pauwels2007variance} for affine stochastic volatility models, is that the Fourier representation \eqref{eq:fourier} of European claims can be extended to their GKW-decomposition \eqref{eq:gkw_dec}. More precisely, both the strategy $\vartheta$ and the hedging error $\epsilon^2 = \E{L_T^2}$ of the variance-optimal hedging problem \eqref{eq:var_opt} can be expressed in terms of Fourier-type integrals, similar to \eqref{eq:fourier}. For our problem of interest, the \emph{semi-static} hedging problem \eqref{eq:setup}, the results of \cite{hubalek2006variance, KalP07, pauwels2007variance} are not sufficient: To obtain the quantities $A$, $B$ and $C$ of Theorem~\ref{thm:semistatic}, we also need to compute the covariances $\EE[L_T^i L_T^j]$ between the GKW-residuals of different claims. In the companion paper~\cite{ditella2017variance} we extend the results of \cite{hubalek2006variance, KalP07, pauwels2007variance} 
to the semi-static hedging problem. Moreover, we show that the method can be used in any stochastic volatility models where the Fourier-transform of the log-price $X$ is known (e.g. the Heston, the 3/2 or the Stein-Stein model, cf.\@~\cite{lewis2000option}). Here, we only need a special case of the more general results in \cite{ditella2017variance}, which is condensed into Theorem~\ref{thm:markov}(i) below. 

In order to formulate the representation result, we assume that a claim $H^0$ (e.g. a variance swap), and supplementary assets $H^1, \dotsc, H^n$ with Fourier representations \eqref{eq:rep.fourier} are given, and define for $u, u_1, u_2 \in \CC$ complex-valued predictable processes of finite variation $\cA, \cB(u), \cC(u_1,u_2)$ by
\begin{subequations}\label{eq:complex_proc_ABC}
\begin{align}
\rmd\cA&= \rmd \scal{H^0}{H^0} - \vartheta^0\vartheta^0 \rmd \scal{S}{S}, &&\quad \cA_0 = 0\\
\rmd\cB(u) &= \rmd \scal{H^0}{H(u)} - \vartheta^0\vartheta(u) \rmd \scal{S}{S}, &&\quad \cB_0(u) = 0,\\
\rmd\cC(u_1,u_2) &= \rmd \scal{H(u_1)}{H(u_2)} - \vartheta(u_1)\vartheta(u_2) \rmd \scal{S}{S}, &&\quad \cC_0(u_1,u_2) = 0. 
\end{align}
\end{subequations}

\begin{theorem}\label{thm:markov}Let a stochastic volatility model with forward price process $S = e^X$ and variance process $V$ be given, and let $T>0$ be a fixed time horizon.  Let $H^0$ be a variance swap with payoff $[X,X]_T = \int_0^T V_t \rmd t$ and let the supplementary assets $(H^1, \dotsc, H^n)$ be European puts or calls with Fourier representations given by \eqref{eq:rep.fourier}. Assume that $(S,V)$ are continuous square-integrable semi-martingales and that there exist functions $h(u,t,V_t)$, $\gamma(t,V_t)$, continuously differentiable in the last component, such that
\begin{equation}\label{eq:markov}
H_t(u) = \Econd{e^{uX_T}}{\cF_t} = e^{uX_t} h(u,T-t,V_t), \qquad F_t := \Econd{[X,X]_T - [X,X]_t}{\cF_t} = \gamma(T-t,V_t).
\end{equation}
Then the following holds true:
\begin{enumerate}[(i)]
\item The quantities $A$, $B$ and $C$ in Theorem~\ref{thm:semistatic} can be represented as
\begin{subequations}\label{eq:complex_ABC}
\begin{align}
A &= \E{\cA_T} &&\\
B_i &= \int_{\cS(R_i)} \E{\cB_T(u)} \tilde f_i(u) \rmd u,  &&\quad i \in \set{1, \dotsc, n} \\
C_{ij} &= \int_{\cS(R_i)} \int_{\cS(R_j)} \E{\cC_T(u_1,u_2)} \tilde f_i(u_1) \tilde f_j(u_2) \rmd u_1 \rmd u_2, &&\quad (i,j) \in \set{1, \dotsc, n}^2. \label{eq:complex_C}
\end{align}
\end{subequations}
\item The processes \eqref{eq:complex_proc_ABC} can be written as
\begin{subequations}\label{eq:complex_ABC_markov}
\begin{align}
\rmd\cA_t&= (\partial_v \gamma(T-t,V_t))^2 \,dQ_t, \\
\rmd\cB_t(u) &= e^{uX_t} \,\partial_v h(u,T-t,V_t)\, \partial_v \gamma(T-t,V_t)\,dQ_t\\
\rmd\cC_t(u_1,u_2) &= e^{(u_1 + u_2)X_t} \,\partial_v h(u_1,T-t,V_t)\, \partial_v h(u_2,T-t,V_t) \,dQ_t
\end{align}
\end{subequations}
where
\begin{equation}
dQ = \rmd [V,V] - \frac{\rmd[X,V]}{\rmd [X,X]} \rmd[X,V].
\end{equation}
\end{enumerate}
\end{theorem}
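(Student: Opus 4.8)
The plan is to treat the two parts separately. Part~(i) is a ``Fubini for the GKW-decomposition'': the quantities $A$, $B$, $C$ are the covariances $\E{L_T^iL_T^j}$ of the GKW-residuals (Theorem~\ref{thm:semistatic}), and the idea is to express the residuals of the European options $H^i$ through the residuals of the elementary martingales $H(u)$ with payoff $e^{uX_T}$, and then to interchange the Fourier integration with the covariance. Part~(ii) is a direct It\^o computation exploiting the Markov structure~\eqref{eq:markov} and the identity $S=e^X$.

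For part~(i) I would start from the identity $\E{L_T^iL_T^j}=\E{\scal{H^i}{H^j}_T-\int_0^T\vartheta_t^i\vartheta_t^j\,\rmd\scal{S}{S}_t}$ of Theorem~\ref{thm:semistatic}. For the variance swap $H^0$ this is, by the definition~\eqref{eq:complex_proc_ABC} of $\cA$, precisely $A=\E{\cA_T}$. For a European option $H^i=f_i(X_T)$ I would use~\eqref{eq:rep.fourier}--\eqref{eq:fourier} to write the martingale as $H_t^i=\int_{\cS(R_i)}H_t(u)\,\tilde f_i(u)\,\rmd u$, and then argue that the GKW-decomposition with respect to $S$, being a bounded linear operation on $\cH^2$, commutes with this integral, so that $\vartheta^i=\int_{\cS(R_i)}\vartheta(u)\,\tilde f_i(u)\,\rmd u$ and $L^i=\int_{\cS(R_i)}L(u)\,\tilde f_i(u)\,\rmd u$, where $(\vartheta(u),L(u))$ is the GKW-decomposition of $H(u)$ and $\vartheta(u)=\rmd\scal{H(u)}{S}/\rmd\scal{S}{S}$ is the complex analogue of~\eqref{eq:int.gkw.dec.H}. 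Substituting into $\E{L_T^iL_T^j}$ and $\E{L_T^0L_T^i}$, pulling the iterated Fourier integral out of the expectation, and using bilinearity of the predictable covariation together with the orthogonality of $L(u)$ to $S$ to identify $\E{L_T(u_1)L_T(u_2)}=\E{\scal{H(u_1)}{H(u_2)}_T-\int_0^T\vartheta_t(u_1)\vartheta_t(u_2)\,\rmd\scal{S}{S}_t}=\E{\cC_T(u_1,u_2)}$ and $\E{L_T(u)L_T^0}=\E{\cB_T(u)}$, yields~\eqref{eq:complex_ABC}.

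For part~(ii) I would proceed by direct computation. Since $[X,X]_t=\int_0^tV_s\,\rmd s$, the representation~\eqref{eq:markov} gives $H^0_t=\int_0^tV_s\,\rmd s+\gamma(T-t,V_t)$; applying It\^o's formula to $\gamma(T-t,V_t)$ and to $e^{uX_t}h(u,T-t,V_t)$ and using that $H^0$ and $H(u)$ are martingales (hence all finite-variation terms cancel), the continuous martingale parts are, modulo finite variation, $\rmd H^0_t=\partial_v\gamma(T-t,V_t)\,\rmd V^c_t$ and $\rmd H_t(u)=uH_t(u)\,\rmd X^c_t+e^{uX_t}\partial_vh(u,T-t,V_t)\,\rmd V^c_t$, where $X^c,V^c$ denote the continuous martingale parts of $X,V$; note that only first $v$-derivatives of $h,\gamma$ enter, matching the stated regularity. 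From $S=e^X$ one gets $\rmd S=S\,\rmd X^c$, hence $\rmd\scal{S}{S}=S^2\,\rmd[X,X]$ and $\rmd\scal{H(u)}{S}=S\big(uH_t(u)\,\rmd[X,X]+e^{uX_t}\partial_vh\,\rmd[X,V]\big)$, so that $\vartheta_t(u)=\tfrac1{S_t}\big(uH_t(u)+e^{uX_t}\partial_vh\cdot\beta_t\big)$ with $\beta_t:=\rmd[X,V]_t/\rmd[X,X]_t$, and similarly $\vartheta_t^0=\tfrac1{S_t}\partial_v\gamma\cdot\beta_t$. Inserting these into the definitions~\eqref{eq:complex_proc_ABC}, the terms carrying $\rmd[X,X]$ cancel outright, the terms carrying $\rmd[X,V]$ cancel because $\rmd[X,V]=\beta\,\rmd[X,X]$, and the surviving contribution is in each case the product of the two $\partial_v$-factors times $\rmd[V,V]-\beta\,\rmd[X,V]=\rmd Q$; this is~\eqref{eq:complex_ABC_markov}.

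The step I expect to be the genuine obstacle is the interchange in part~(i): justifying that both the GKW-decomposition and the expectation commute with the Fourier integrals requires uniform $L^2$-control of the family $\{H(u):u\in\cS(R_i)\}$, which holds under the standing hypothesis $\E{e^{R_iX_T}}<\infty$ but calls for a careful Hilbert-space-valued integration argument; this is exactly what is established in the companion paper~\cite{ditella2017variance}, from which part~(i) would be quoted. Part~(ii), by contrast, is essentially bookkeeping, the one non-routine point being the cancellation of the $[X,X]$- and $[X,V]$-terms that turns the quadratic (co)variations into the ``orthogonalized'' differential $\rmd Q$; the modest regularity of $h,\gamma$ is enough because the a priori martingale property of $H(u)$ and $H^0$ lets one read off their martingale parts without second derivatives.
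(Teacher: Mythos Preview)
Your proposal is correct and matches the paper's own proof: part~(i) is indeed quoted from the companion paper~\cite{ditella2017variance}, and part~(ii) is the It\^o computation you outline. The only cosmetic difference is that the paper packages part~(ii) via the quadratic-covariation rule $\rmd[\alpha(Y),\beta(Y)]=\sum_{i,j}\partial_{y_i}\alpha\,\partial_{y_j}\beta\,\rmd[Y_i,Y_j]$ applied directly to $H(u)$, $H^0$ and $S$ as functions of $(X,V,t)$, whereas you first extract the martingale parts and then take covariations; your remark that the a~priori martingale property lets one get by with $C^1$-regularity in $v$ (rather than the $C^2$ needed for the full It\^o formula) is a nice point that the paper's proof glosses over.
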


\begin{proof} Part (i) of the theorem is technically demanding and follows from Theorems~4.5, 4.6 and 4.8 in the companion paper~\cite{ditella2017variance}. In order to show part (ii), let $Y$ be a $\RR^n$-valued continuous semi-martingale and let $\alpha, \beta$ be functions in $C^2(\RR^n,\CC)$. Using Ito's formula (cf.\@ \cite[Thm.~I.4.57]{JS00}) and the properties of quadratic covariation (cf.\@ \cite[Thm~I.4.49]{JS00}) we obtain the calculation rule
\begin{equation}\label{eq:rule}
\rmd [\alpha(Y),\beta(Y)] = \sum_{i,j} \partial_{y_i} \alpha(Y) \partial_{y_j} \beta(Y) \rmd [Y_i, Y_j].
\end{equation}
Inserting the definition of the variance-optimal strategy $\vartheta(u) = \frac{\rmd \scal{H(u)}{S}}{\scal{S}{S}}$ into \eqref{eq:complex_proc_ABC} and recognizing that for continuous martingales predictable variation $\scal{.}{.}$ and quadratic variation $[.,.]$ coincide, we obtain
\[dC(u_1,u_2) = \rmd [H(u_1),H(u_2)] - \frac{\rmd[H(u_1),S]}{\rmd[S,S]} \rmd [H(u_2),S]\]
for $C$ and similar expressions for $B$ and $A$. Using assumption \eqref{eq:markov} and applying \eqref{eq:rule} several times we obtain \eqref{eq:complex_ABC_markov}.
\end{proof}

\subsection{The Heston model} In the Heston model (cf.\@ \cite{heston1993closed}) the risk neutral price process $S$ is given by $S_{\,t}=S_{\,0}\exp(X_{\,t})$, $t\geq 0$, where
\begin{subequations}\label{eq:Heston}
\begin{align}
\rmd X_{\,t} &= -\frac12 V_{\,t}\rmd t+\sqrt{V_{\,t}} \rmd W_{\,t}\p{\,1},\\
\rmd V_{\,t} &=-\lambda(V_{\,t}-\kappa )\rmd t+\sigma\sqrt{V_{\,t}} \rmd W_{\,t}\p{\,2}\,,
\end{align}
\end{subequations}
where  $W\p{\,1}$ and $W\p{\,2}$ are two Brownian motions such that $\aPP{W\p{\,1}}{W\p{\,2}}_{\,t}=\rho t$, $\rho\in[-1,1]$; $\lambda,\sigma, \kappa>0$. \\

The joint moment generating function of the Heston model is known explicitly and of the form
\begin{equation}\label{eq:joi.char}
\E{\exp\left(uX_T + wV_T\right)}=\exp\left(\phi_T(u,w)+\psi_T(u,w) V_0 +u X_0\right),
\end{equation}
well-defined for real arguments in the set
\begin{equation}\label{eq:real_moments}
\cD_T := \set{(u,w) \in \RR^2: \E{\exp\left(uX_T + wV_T\right)} < \infty},
\end{equation}
and with analytic extension to the associated `complex strip' 
\[S(\cD_T) := \set{(u,w) \in \CC^2: (\Re u , \Re w) \in \cD_T}.\]
To represent $\phi_t(u,w)$ and $\psi_t(u,w)$, we introduce $\Delta(u) = (\rho \sigma u - \lambda)^2 - \sigma^2(u^2 - u)$, 
\[r_\pm = r_\pm(u,w) := \frac{1}{\sigma^2} \left(\lambda - \rho \sigma u \pm \sqrt{\Delta(u)}\right) \qquad g = g(u,w) = \frac{r_- - w}{r_+ - w}.\]
Then the explicit expression of $\psi_t$ is given, for $(u,w) \in \cD_t$ by (cf.\@ \cite[Prop.~4.2.1]{Al15}), 
\begin{equation}\label{eq:sol.psi}
\psi_t(u,w):=\begin{cases}
w+(r_- - w)\,\frac{1-\exp\big(- t\sqrt{\Delta}\big)}{1-g\exp(- t\sqrt{\Delta})},&\quad \Delta(u) \neq0;\\\\
w+(r_- - w)^2\,\frac{\sigma^2 t}{2+\sigma^2t (r_- - w)},&\quad \Delta(u) = 0\,.
\end{cases}
\end{equation}
with the convention
\[\frac{\exp(-t\sqrt\Delta)-g}{1-g}:=1,\qquad \frac{1-\exp(t\sqrt\Delta)}{1-g\exp(t\sqrt\Delta)}:=0\]
 whenever the denominator of $g$ is equal to zero. 
Moreover, $\phi_t(u,w)$ is given by
\begin{equation}\label{eq:sol.phi}
\phi_t(u,w):=\begin{cases}
\lambda \kappa  r_- t  - \frac{2\lambda \kappa}{\sigma^2} \log \left(\frac{1 - g \exp(-t \sqrt{\Delta})}{1 - g}\right)&\quad \Delta(u) \neq0;\\\\
\lambda \kappa r_- t  - \frac{2\lambda \kappa}{\sigma^2} \log \left(1 + \frac{\sigma^2}{2}(r_- - w)t \right)&\quad \Delta(u) = 0\,.
\end{cases}
\end{equation}
The following theorem specializes Theorem~\ref{thm:markov} to the Heston model and gives (up to integration) explicit expressions for the quantities $A$, $B$ and $C$ from Theorem~\ref{thm:semistatic}. The proof of the theorem is given in Appendix~\ref{app:heston}.
\begin{theorem}\label{thm:heston}Let $(X,V)$ be given by the Heston model \eqref{eq:Heston} and let the claim $H^0$ be a variance swap, i.e.\@ with payoff $H^0_T = [X,X]_T$ at maturity $T$. Let the supplementary claims $(H^1, \dotsc, H^n)$ be European puts and calls with payoffs $f_i$ and two-sided Laplace transforms $\tilde f_i$, integrable along strips $\cS(R_i)$, as in \eqref{eq:fourier}. If $\E{e^{2R_i X_T}} < \infty$ for all $i = 1, \dotsc, n$ then the quantities $A$, $B$ and $C$, defined in Theorem~\ref{thm:semistatic} are given by
\begin{align*}
A &= \frac{\sigma^2 (1 - \rho^2)}{\lambda^2}  \int_0^T  \left(1 - e^{-\lambda(T-t)}\right)^2 \E{V_t} \,dt\\
B_i &= \frac{\sigma^2 (1 - \rho^2)}{\lambda}  \int_0^T \int_{\cS(R_i)} \left(1 - e^{-\lambda(T-t)}\right) \psi_{T-t}(u) \E{H_t(u) V_t} \tilde f_i(u) \,du \,dt\\
C_{ij} &= \sigma^2 (1 - \rho^2) \cdot \\
&\phantom{=}\int_0^T \int_{\cS(R_i)} \int_{\cS(R_j)}  \psi_{T-t}(u_1) \psi_{T-t}(u_2) \E{H_t(u_1,u_2) V_t} \tilde f_i(u_1) \tilde f_j(u_2) \,du_1 \,du_2 \,dt,
\end{align*}
where 
\begin{subequations}\label{eq:expectations}
\begin{align}
\E{V_t} &= e^{-\lambda t} V_0 + \left(1 - e^{-\lambda t}\right) \kappa \label{eq:V}\\
\E{H_t(u)V_t} &= \left\{\partial_w \phi \big(u, \psi_{T-t}(u,0)\big) + V_0 \partial_w \psi\big(u, \psi_{T-t}(u,0)\big)\right\} e^{uX_0} h(u,t,V_0), \label{eq:HV}\\
\E{H_t(u_1) H_t(u_2) V_t} &= \left\{\partial_w \phi\big(u, q_{T-t}(u_1,u_2)\big) + V_0 \partial_w \psi \big(u, q_{T-t}(u_1,u_2)\big)\right\}  \cdot \label{eq:HHV}\\
&\phantom{=} e^{(u_1 + u_2)X_0} h(u_1,t,V_0) h(u_2,t,V_0),\notag
\end{align}
\end{subequations}
with
\[q_t(u_1,u_2) = \psi_t(u_1,0) +  \psi_t(u_2,0), \qquad h(u,t,V_0) = \exp\Big(\phi_t(u,0) + V_0 \psi_t(u,0)\Big).\]
\end{theorem}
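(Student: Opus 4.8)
\emph{Setting up Theorem~\ref{thm:markov}.} The plan is to feed the Heston-specific quantities into Theorem~\ref{thm:markov} and then evaluate the three expectations in \eqref{eq:complex_ABC} by differentiating the joint transform \eqref{eq:joi.char}. First I verify the hypotheses of Theorem~\ref{thm:markov}. By the affine structure of \eqref{eq:Heston}, $H_t(u)=\Econd{e^{uX_T}}{\cF_t}=e^{uX_t}\exp\bigl(\phi_{T-t}(u,0)+\psi_{T-t}(u,0)V_t\bigr)$, so \eqref{eq:markov} holds with $h(u,\tau,v)=\exp\bigl(\phi_\tau(u,0)+\psi_\tau(u,0)v\bigr)$ and hence $\partial_v h(u,\tau,v)=\psi_\tau(u,0)\,h(u,\tau,v)$. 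For the variance swap, $[X,X]_t=\int_0^t V_s\,\rmd s$ together with the CIR conditional mean $\Econd{V_s}{\cF_t}=e^{-\lambda(s-t)}V_t+(1-e^{-\lambda(s-t)})\kappa$ gives, after integrating over $s\in[t,T]$, $F_t=\gamma(T-t,V_t)$ with $\gamma(\tau,v)=\frac{1-e^{-\lambda\tau}}{\lambda}\,v+\bigl(\tau-\frac{1-e^{-\lambda\tau}}{\lambda}\bigr)\kappa$, so $\partial_v\gamma(\tau,v)=\frac{1-e^{-\lambda\tau}}{\lambda}$; square-integrability of $(S,V)$ and of the payoff $\int_0^T V_t\,\rmd t$ is part of the standing assumptions. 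From \eqref{eq:Heston} one reads $\rmd[X,X]_t=V_t\,\rmd t$, $\rmd[X,V]_t=\rho\sigma V_t\,\rmd t$ and $\rmd[V,V]_t=\sigma^2V_t\,\rmd t$, hence $\rmd Q_t=\sigma^2(1-\rho^2)V_t\,\rmd t$. Substituting all of this into \eqref{eq:complex_ABC_markov} and using $e^{uX_t}h(u,T-t,V_t)=H_t(u)$ gives the explicit integrands; taking $\E{\cdot}$ and inserting into \eqref{eq:complex_ABC} reduces the theorem to computing $\E{V_t}$, $\E{H_t(u)V_t}$ and $\E{H_t(u_1)H_t(u_2)V_t}$.

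\emph{The three moments.} $\E{V_t}$ is the CIR mean \eqref{eq:V}. Since $H_t(u)$ is $\cF_t$-measurable, inserting its affine form and pulling out the constant factor gives
\[
\E{H_t(u)V_t}=e^{\phi_{T-t}(u,0)}\,\E{V_t\,e^{uX_t+\psi_{T-t}(u,0)V_t}},
\]
and the remaining expectation is the $w$-derivative of $w\mapsto\E{e^{uX_t+wV_t}}=\exp\bigl(uX_0+\phi_t(u,w)+\psi_t(u,w)V_0\bigr)$ at $w=\psi_{T-t}(u,0)$, i.e.\ $\bigl(\partial_w\phi_t(u,w)+V_0\,\partial_w\psi_t(u,w)\bigr)$ at that value of $w$, times that exponential. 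The flow (semigroup) identities $\phi_{T-t}(u,0)+\phi_t(u,\psi_{T-t}(u,0))=\phi_T(u,0)$ and $\psi_t(u,\psi_{T-t}(u,0))=\psi_T(u,0)$ then collapse the leftover exponential into $\E{e^{uX_T}}$, giving \eqref{eq:HV}. The computation of $\E{H_t(u_1)H_t(u_2)V_t}$ is identical in structure, except that conditioning on $\cF_t$ turns $H_t(u_1)H_t(u_2)$ into a single exponential carrying $q_{T-t}(u_1,u_2):=\psi_{T-t}(u_1,0)+\psi_{T-t}(u_2,0)$ in front of $V_t$, so one differentiates the transform in $w$ at the point $\bigl(u_1+u_2,\,q_{T-t}(u_1,u_2)\bigr)$, which yields \eqref{eq:HHV}. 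The $\partial_w\phi$ and $\partial_w\psi$ occurring there are routine derivatives of the closed forms \eqref{eq:sol.psi}, \eqref{eq:sol.phi}.

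\emph{Technicalities.} Three interchanges need to be checked. (a) Differentiation under the expectation above is legitimate because $w\mapsto\E{e^{uX_t+wV_t}}$ is analytic on the interior of $\{w:(u,w)\in\cD_t\}$ and the evaluation points stay in that interior as $u$ (resp.\ $u_1,u_2$) ranges over the strips $\cS(R_i)$; this is precisely what the hypothesis $\E{e^{2R_iX_T}}<\infty$ buys, the extra factor $V_t$ being harmless because the CIR process has all polynomial moments. (b) Fubini for exchanging $\int_0^T\rmd t$ with $\E{\cdot}$ follows from the same bounds. (c) Fubini for the iterated Fourier integrals over $\cS(R_i)$, respectively $\cS(R_i)\times\cS(R_j)$, uses integrability of $\tilde f_i$ along its strip together with the bounds on $\left|\E{H_t(u)V_t}\right|$ and $\left|\E{H_t(u_1)H_t(u_2)V_t}\right|$ read off from \eqref{eq:HV}, \eqref{eq:HHV}.

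I expect the mixed moment $\E{H_t(u_1)H_t(u_2)V_t}$ to be the main obstacle: one must condition on $\cF_t$ so that the two conditional transforms fuse into the single $q_{T-t}$-exponential, verify that the point at which the joint transform is differentiated stays inside $\cD_t$ as $(u_1,u_2)$ ranges over the integration strips --- which is exactly why the hypothesis involves $e^{2R_iX_T}$ rather than $e^{R_iX_T}$ --- and carry the two correction factors $e^{\phi_{T-t}(u_i,0)}$ through so that the flow identities apply cleanly. Everything else is bookkeeping on top of Theorems~\ref{thm:semistatic} and \ref{thm:markov}.
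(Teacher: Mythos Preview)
Your proposal is correct and follows the paper's proof in Appendix~\ref{app:heston} essentially step by step: compute the Heston-specific ingredients $h$, $\partial_v h$, $\partial_v\gamma$, $\rmd Q$, plug them into Theorem~\ref{thm:markov}, and obtain the three moments by differentiating the joint transform \eqref{eq:joi.char} in $w$ (the paper isolates this last step as Lemma~\ref{lem:Vdiff} via an exponential-tilting argument). For the domain check you correctly flag as the main obstacle, the paper's argument (Lemma~\ref{lem:DT}) combines convexity of $\cD_t$ with convexity of $\psi_{T-t}(\cdot,0)$ to place $(x_1+x_2,\,q_{T-t}(x_1,x_2))$ in $\cD_t$, and then uses the monotonicity and real-part bounds of Lemma~\ref{lem:DT}(d),(e) to pass to complex $(u_1,u_2)$.
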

\begin{remark}\label{rem:rho}
Note that the common leading factor $\sigma^2 (1 - \rho^2)$ of $A$, $B$ and $C$ also becomes the leading factor of the minimal squared hedging error $\epsilon^2$, cf.\@ \eqref{eq:outer_ABC}. This makes perfect sense, since it makes the hedging error roughly proportional to vol-of-vol $\sigma$ and shows that the hedging error vanishes in the complete-market boundary cases $\rho = \pm 1$ of the Heston model. However, $\rho$ and $\sigma$ also appear inside $\phi$, $\psi$ and therefore their influence on $\epsilon^2$ is not limited to the leading factor $\sigma^2 (1 - \rho^2)$ alone. \end{remark}

\begin{remark} The domain $D_T$ of finite moments in the Heston model has been described in \cite[Prop.~3.1]{AP07} (see also \cite{FK10}). Using these results, the moment condition in Theorem~\ref{thm:heston} can be checked in the following way: Set $\chi(u) := \rho\sigma u -\lambda$, $\Delta(u) := \chi(u)^2  -\sigma^2(u^2 - u)$ and define
\begin{equation}\label{eq:exp.time}
T_*(u) =\begin{cases}+\infty\,,&\quad \Delta(u)\geq0,\ \chi(u)<0\,;\\\\
\frac{1}{\sqrt{\Delta(u)}}\,\log\Big(\frac{\chi(u)+\sqrt{\Delta(u)}}{\chi(u)-\sqrt{\Delta(u)}}\Big)\,,&\quad \Delta(u)\geq0,\ \chi(u)>0\,;\\\\
\frac{2}{\sqrt{-\Delta(u)}}\,\Big(\arctan\frac{\sqrt{-\Delta(u)}}{\chi(u)}+\pi1_{\{\chi(u)<0\}}\Big)\,,&\quad \Delta(u)<0\,.
\end{cases}
\end{equation}
By \cite[Prop.~3.1]{AP07}, the moment condition $\E{e^{2R_i X_T}} < \infty$ is equivalent to $T < T_*(2R_i)$.
\end{remark}


\section{Numerical results}\label{sec:numerics}
The following numerical implementation should be considered in terms of a `stylized financial market' setting, i.e., while we do not calibrate the model to current market data, we use parameters that are realistic in a market setting. More specifically, we use the Heston model parameters from \cite{gatheral2006volatility}: 
\begin{equation}\label{eq:par}
\begin{aligned}
\kappa &= 0.0354 & \quad \lambda &= 1.3253 &\quad  \rho &= -0.7165 \\ \sigma &= 0.3877 & \quad V_0 &= 0.0174 
\end{aligned}
\end{equation}
In Subsection~\ref{sub:rho} we vary the leverage parameter $\rho$, but keep all other parameters fixed. The current stock price is normalized to $S_0 = 100$ and we use a time-to-maturity of $T=1$ (years) for the variance swap and the call options. The price of a variance swap (i.e.\@ the swap rate $k_* = \E{[\log S, \log S]_T}$) can be readily calculated as
\[k_* = \int_0^T \E{V_t} dt = \kappa T + (V_0 - \kappa) \frac{1 - e^{-\lambda T}}{\lambda} = 0.025427.\]
The supplementary assets are OTM-puts and OTM-calls with strikes ranging from 
\[K_\text{min} = 50 \; \text{ to } \; K_\text{max}  = 150 \text{ in steps of } \; \Delta K = 5.\]

We focus on three aspects of the semi-static hedging problem: 
\begin{itemize}
\item Comparing the different methods that were proposed in Section~\ref{sec:sparse} to solve the sparse semi-static hedging problem;
\item Analyzing the dependency of hedging error and optimal portfolio composition on effective portfolio size $d$;
\item Analyzing the dependency of hedging error and optimal portfolio composition on the leverage parameter $\rho$.
\end{itemize}

\subsection{Comparison of methods}\label{sub:methods}
As a first step, we computed $A$, $B$ and the matrix $C$ from Theorem~\ref{thm:semistatic} using the Fourier-representation in Theorem~\ref{thm:heston} by adaptive integration in MATLAB. 
Next, we implemented the methods, described in Section~\ref{sec:sparse}, i.e.\@ 
\begin{enumerate}[(1)]
\item Greedy forward selection (with and without short-sale constraints)
\item Leaps-and-Bounds (with and without short-sale constraints)
\item LASSO
\end{enumerate}
in \texttt{R}, \cite{rct2016}; using the function \texttt{lars} in the package \texttt{lars} \cite{hastie2013lars} with option \texttt{type="lasso"} for the computation of the LASSO solution. While computationally most demanding, the Leaps-and-Bounds solution can serve as a benchmark solution, since it is (up to numerical error) the exact solution of the sparse semi-static hedging problem \eqref{eq:l0}. The other methods, in contrast, only return a `reasonably close' solution to \eqref{eq:l0}. In all cases, we report the \emph{relative hedging error} 
$\epsilon / k_*$, i.e.\@ the hedging error normalized by the price of the variance swap.

\begin{figure}[htbp] 
  \centering
     \includegraphics[width=\textwidth]{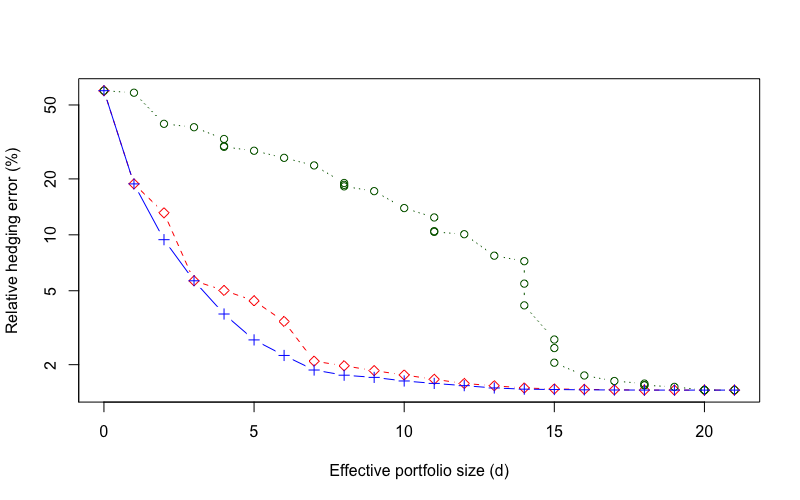}
  \caption{Relative hedging error (on log-scale) for sparse semi-static hedging of a variance swap with different effective portfolio sizes $d$. The plot compares the solutions obtained with the Leaps and Bounds method (blue crosses), greedy method (red diamonds) and LASSO (green circles).}
  \label{fig:error}
\end{figure}

A challenge that is faced by all methods is the bad condition of the matrix $C$. With parameters chosen as above \eqref{eq:par} the reciprocal condition number of $C$ is $\num{1.11e-06}$. While small, this number is still several orders of magnitude larger than the machine precision of $\num{2.22e-16}$ (double precision arithmetic) on the computer that was used. The bad condition of $C$ is not surprising, since put and call options with neighboring strikes are highly correlated. This effect is likely amplified by the fact that $C$ contains the correlations of the GKW-residuals and not the correlations of the option prices themselves. 
While we have considered pre-conditioning of $C$, along the lines of \cite{neumaier1998solving}, we have found that greedy forward selection and Leaps-and-Bounds perform well even without additional conditioning. Also the addition of short-sale constraints seems to have a regularizing effect on the methods.

Figure~\ref{fig:error} shows the relative hedging error (as percentage of the variance swap price) attained with the optimal portfolio returned by methods 1-3 for different effective portfolio sizes $d=0\dots21$. Notice that the implementation of LASSO adds \emph{and removes} supplementary assets from the active set, such that the graph can show multiple solutions for the same effective portfolio size (e.g. for $d=15$). Focusing on the comparison of methods, we find that 

\begin{itemize}
\item The Leaps-and-Bounds method returns the solution with the smallest hedging error, consistent with the fact that it solves \eqref{eq:l0} \emph{exactly}. It is remarkably fast, but  further numerical experiments indicate that its runtime is sensitive to the choice of model parameters. 
\item The greedy method is the fastest method and the residual hedging error of its solution is only slightly higher than the hedging error of the Leaps-and-Bounds solution. Moreover, the performance of the greedy method is stable with respect to parameter choice. 
\item The LASSO methods seems to be severely affected by the bad condition of $C$. This is not surprising, since it has been remarked e.g. in \cite[Sec.~2.6]{buhlmann2011statistics} that the LASSO method has problems with highly correlated data. 
\end{itemize}
Summing up, we can recommend the greedy method as fast, reliable and easy to implement. The Leaps and bounds methods is useful as an efficient way to compute an exact benchmark solution. We cannot recommend LASSO, as it cannot deal well with the bad condition of $C$.\footnote{It should be said, in all fairness, that the \texttt{R}-function \texttt{lars} also provides the option \texttt{"stepwise"} instead of \texttt{"lasso"} which effectively corresponds to the greedy method.} Interestingly, this observations are contrary to the usual wisdom in variable selection for regression problems, where greedy forward selection often has unstable performance and LASSO yields superior results, cf.\@ \cite[Ch.~2]{buhlmann2011statistics}. We attribute these findings to the highly correlated nature of the matrix $C$, which is untypical in regression scenarios, but a natural feature of our hedging problem.

\subsection{Analysis of the hedging error}\label{sub:error}
We return to Figure~\ref{fig:error} to analyze the hedging error resulting from the sparse variance-optimal semi-static hedging problem~\eqref{eq:l0} for different effective portfolio sizes $d$. We consider the benchmark solution returned by the Leaps-and-Bounds method with short-sale constraints. First, we note that dynamic hedging in the underlying $S$, \emph{without} using any static positions in puts and calls ($d=0$) results in a relative hedging error of $\SI{59.7}{\percent}$. This error is already reduced to $\SI{5.7}{\percent}$ by just adding three supplementary assets ($d=3$) and can be further reduced to $\SI{3.4}{\percent}$ by selecting six supplementary assets $(d=6)$. Finally, the error levels off to $\SI{1.6}{\percent}$ when the full range ($d=21$) of puts and calls between $K_\text{min} = 50$ and $K_\text{max} =150$ is used. Further substantial reductions of the hedging error can only be achieved by extending the range of available strikes; adding more options within the current range has only negligible effects. 

The sharp decrease of the hedging error between $d=0$ and $d=3$ affirms the basic premise of \emph{sparse semi-static hedging}: That selecting only a small number of supplementary assets already leads to a significant reduction of the hedging error. On the other hand, the poor performance of the LASSO solution shows that a sub-optimal choice of supplementary assets does not result in a satisfactory reduction of the hedging error. In other words, it is important that the sparse sub-portfolios are chosen optimally, and not arbitrarily.

\subsection{Composition of the hedging portfolios}\label{sub:portfolio}
\begin{sidewaysfigure}
 \centering%
\subfloat[Leaps and Bounds with Short-Sale Constraints]{\includegraphics[width=0.48\textheight]{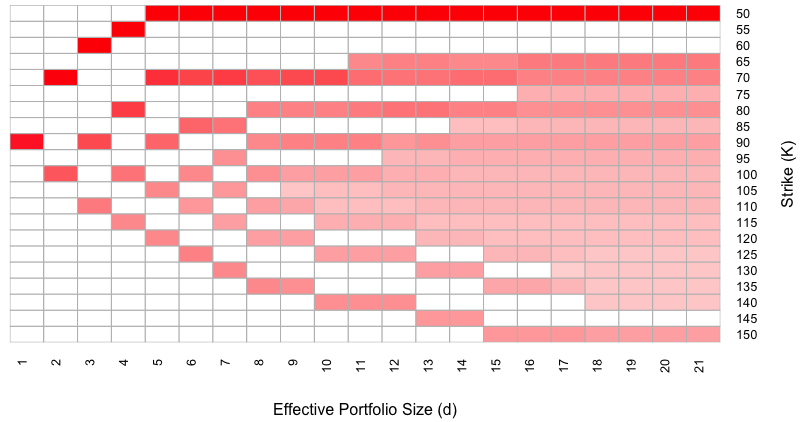}\label{fig:leaps_ssc}}\quad%
\subfloat[Leaps and Bounds without Short-Sale Constraints]{\includegraphics[width=0.48\textheight]{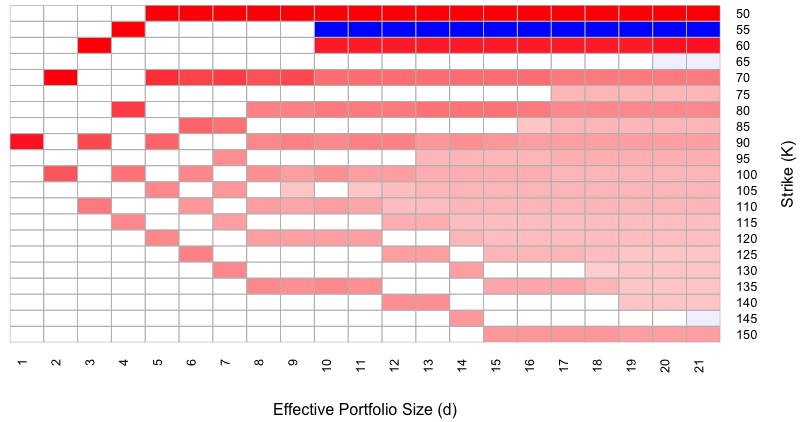}\label{fig:leaps}}\\%
\subfloat[Greedy forward selection with Short-Sale Constraints]{\includegraphics[width=0.48\textheight]{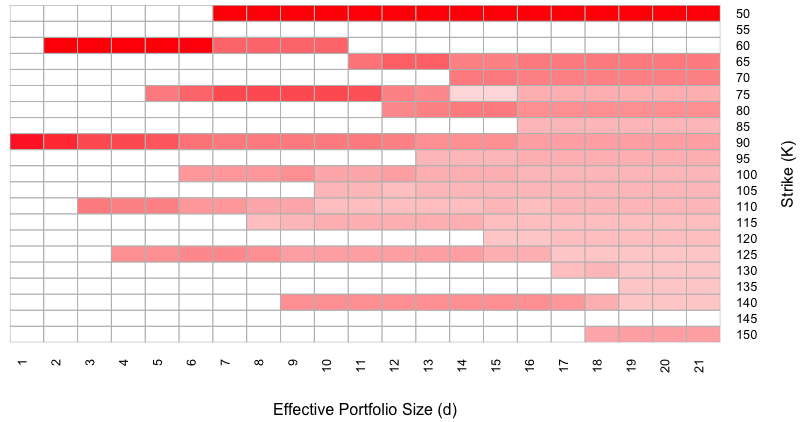}\label{fig:greedy}}\quad%
\subfloat[LASSO]{\includegraphics[width=0.48\textheight]{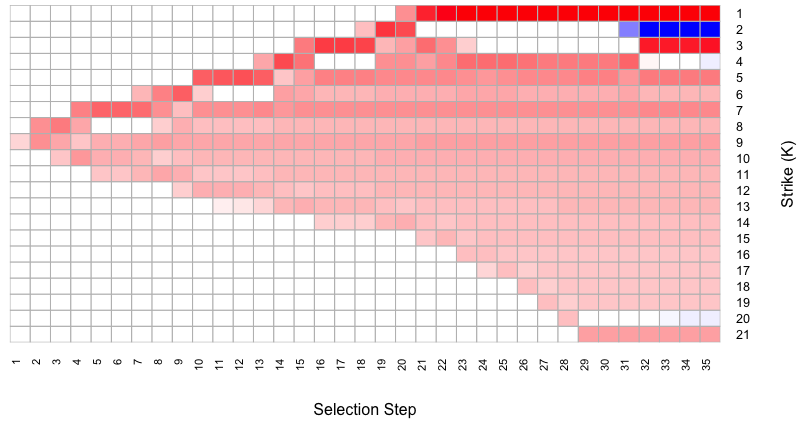}\label{fig:lasso}}%
\caption{Composition of optimal hedging portfolios in dependency on effective portfolio size. Long positions are shown in red and short positions in blue; color saturation corresponds to position size $v(K)$. Different subplots correspond to different solution methods.\label{fig:portfolio}}%
\vspace{-40em}
\end{sidewaysfigure}
We now turn to the composition of the static hedging portfolio, i.e.\@ the vector $v \in \RR^n$ with the constraint $\norm{v}_0 \le d$, that is returned by the solution methods for the sparse semi-static hedging problem~\eqref{eq:l0}. Recall that the element $v_i$ is the nominal size of the position in the supplementary asset $H^i$, with negative sign indicating a short position. In our setting, the elements of $v$ can simply be indexed by the strike $K$ of the corresponding put/call. The optimal portfolios returned by the different solution methods, along with their dependency on effective portfolio size $d$ are shown in Figure~\ref{fig:portfolio}. We make the following observations:

\begin{itemize}
\item With the exception of the put $K = 55$ only long positions are observed;
\item Positions in OTM puts ($K < 100$) are larger than in OTM calls ($K > 100$), in line with Neuberger's replicating portfolio \eqref{eq:swap_replication};
\item The general pattern (going from effective portfolio size $d=1$ to $21$) for all methods can be described as follows: Start with an (approximately) ATM option. Proceed by selecting both OTM puts and calls, going outwards as $d$ increases and putting more weight on OTM puts, until the limit $K_\text{min} = 50$ is reached. Continue by adding OTM calls  and by filling up the gaps from earlier stages. 
\end{itemize}

We suspect that the rare short positions are numerical artifacts, rather than belonging to the true optimal solution of~\eqref{eq:l0}. Indeed, their effect on the hedging error is minuscule, and we hence recommend to use a-priori short-sale constraints, in the case of hedging a variance swap. 

\begin{figure}[htbp] 
  \centering
     \includegraphics[width=\textwidth]{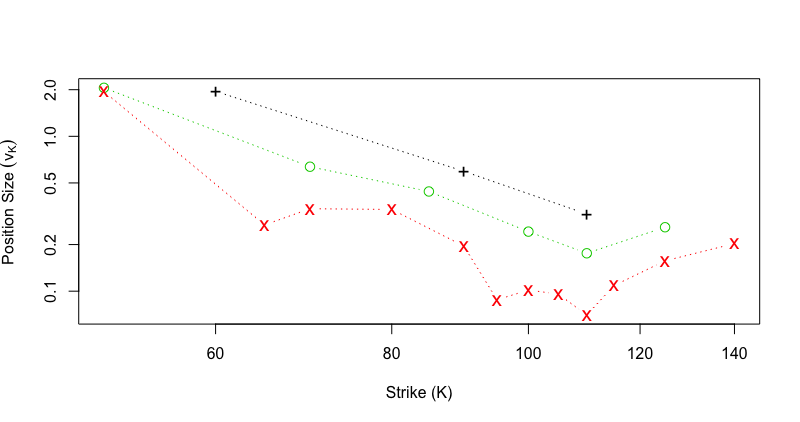}
  \caption{Portfolio weights $v_K$ in the optimal hedging portfolios of effective size $d=3$ (black crosses), $d=6$ (green circles) and $d=12$ (red x's) in doubly logarithmic coordinates. }
  \label{fig:portfolio_weights}
\end{figure}

Figure~\ref{fig:portfolio} gives a good overview of the portfolio composition, but it is difficult to assess the precise size of the individual positions $v_i$. For this reason, we provide in Figure~\ref{fig:portfolio_weights} an additional plot of the portfolio weights $v$ indexed by strike $K$ for the optimal portfolios of effective sizes $d=3,6,12$ in doubly logarithmic coordinates. Note that Neuberger's replicating portfolio \eqref{eq:swap_replication} puts an infinitesimal weight of $v(K)dK = \frac{1}{K^2}dK$ on an option with strike $K$. In doubly logarithmic coordinates, this becomes
\[\log v(K) = - 2 \log K, \]
i.e.\@ in the portfolio weights should form a line of downward slope $-2$. Figure~\ref{fig:portfolio_weights} shows reasonable agreement with this asymptotic result, even for effective portfolio size as small as $d=3$. For $d=12$ numerical errors from the bad condition of the matrix $C$ seem to accumulate and could explain the unruly shape of the graph.

\subsection{The role of correlation}\label{sub:rho}

Finally, we turn to the role of the correlation parameter $\rho$, which is interesting for several reasons: First, the value of $\rho$ does not affect the theoretical price of the variance swap. Second, $\rho$ also does not affect the infinitesimally optimal strategy \eqref{eq:swap_replication}. Finally, $\rho$ allows to tune the degree of market incompleteness, since the Heston model becomes a complete market model in the boundary cases $\rho = \pm 1$. Despite of the first two points, it turns out that $\rho$ has a significant effect on the attainable hedging error and the composition of the optimal portfolio in the sparse semi-static hedging problem. This influence can already be suspected from the leading factor $1 - \rho^2$ appearing in Theorem~\ref{thm:heston}, which propagates to the (squared) hedging error itself, see also Remark~\ref{rem:rho}. Indeed, as Figure~\ref{fig:rho_plot} shows, the dependency of the relative heading error on $\rho$ is very close to a `semi-circle law' $f(\rho) = c_d \sqrt{1 - \rho^2}$, with different constants $c_d$ for different effective portfolio sizes $d$. 

\begin{figure}[htbp] 
  \centering
     \includegraphics[width=\textwidth]{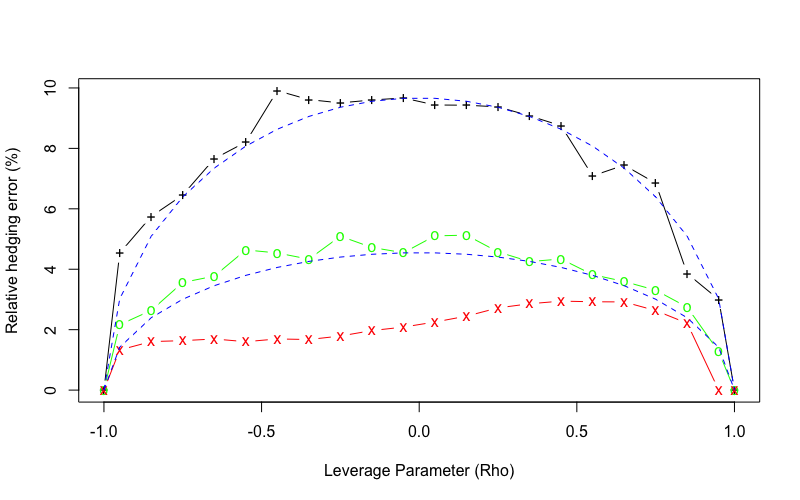}
  \caption{Relative hedging error attainable with a portfolio of effective size $d=3$ (black crosses), $d=6$ (green circles) and $d=12$ (red x's) in relation to the leverage parameter $\rho$. Also shown are the graphs of $f(\rho) = c_d \sqrt{1 - \rho^2}$ (blue dashes) with $c_d$ chosen to fit the red and blue graphs.}
  \label{fig:rho_plot}
\end{figure}

\bibliographystyle{alpha}
\bibliography{bibliography}

\begin{thebibliography}{DTHKR17}

\bibitem[Alf15]{Al15}
Aur{\'e}lien Alfonsi.
\newblock {\em Affine Diffusions and Related Processes: Simulation, Theory and
  Applications}.
\newblock Springer, 2015.

\bibitem[AP07]{AP07}
Leif B.~G. Andersen and Vladimir~V. Piterbarg.
\newblock Moment explosions in stochastic volatility models.
\newblock {\em Finance Stoch.}, 11(1):29--50, 2007.

\bibitem[AS93]{ansel1993decomposition}
Jean-Pascal Ansel and Christophe Stricker.
\newblock D{\'e}composition de {K}unita-{W}atanabe.
\newblock {\em S{\'e}minaire de probabilit{\'e}s de {S}trasbourg}, 27:30--32,
  1993.

\bibitem[BHLP13]{beiglbock2013model}
Mathias Beiglb{\"o}ck, Pierre Henry-Labord{\`e}re, and Friedrich Penkner.
\newblock Model-independent bounds for option prices - a mass transport
  approach.
\newblock {\em Finance and Stochastics}, 17(3):477--501, 2013.

\bibitem[BVDG11]{buhlmann2011statistics}
Peter B{\"u}hlmann and Sara Van De~Geer.
\newblock {\em Statistics for high-dimensional data: methods, theory and
  applications}.
\newblock Springer Science \& Business Media, 2011.

\bibitem[Car11]{carr2011semi}
Peter Carr.
\newblock Semi-static hedging of barrier options under {P}oisson jumps.
\newblock {\em International Journal of Theoretical and Applied Finance},
  14(07):1091--1111, 2011.

\bibitem[CM01]{CM98}
P.~Carr and D.~Madan.
\newblock Towards a theory of volatility trading.
\newblock In {\em Option pricing, interest rates and risk management}, Handb.
  Math. Finance, pages 458--476. Cambridge Univ. Press, Cambridge, 2001.

\bibitem[DTHKR17]{ditella2017variance}
P.~Di~Tella, M.~Haubold, and M.~Keller-Ressel.
\newblock Variance-optimal semi-static hedging in stochastic volatility models
  with {F}ourier representation.
\newblock 2017.

\bibitem[Exc14]{exchange2014cboe}
Chicago Board~Option Exchange.
\newblock The {CBOE} volatility index -- {VIX}.
\newblock Technical report, CBOE, 2014.

\bibitem[FKR10]{FK10}
P.~K. Friz and M.~Keller-Ressel.
\newblock Moment explosions.
\newblock {\em Encyclopedia of Quantitative Finance}, 2010.

\bibitem[FM09]{filipovic2009affine}
Damir Filipovi{\'c} and Eberhard Mayerhofer.
\newblock Affine diffusion processes: Theory and applications.
\newblock {\em Radon Series on Compuatational {\&}.Applied Mathematics},
  8:1--40, 2009.

\bibitem[FR13]{foucart2013mathematical}
Simon Foucart and Holger Rauhut.
\newblock {\em A mathematical introduction to compressive sensing}, volume~1.
\newblock Birkh{\"a}user Basel, 2013.

\bibitem[FS86]{FS86}
H.~F{\"o}llmer and D.~Sondermann.
\newblock Hedging of nonredundant contingent claims, 1986.
\newblock Contributions to Mathematical Economics (W. Hildenbrand and A.
  Mas-Colell, eds.) 205-223.

\bibitem[FS88]{follmer1988hedging}
Hans F{\"o}llmer and Martin Schweizer.
\newblock Hedging by sequential regression: An introduction to the mathematics
  of option trading.
\newblock {\em Astin Bulletin}, 18(02):147--160, 1988.

\bibitem[FW74]{furnival1974regressions}
George~M Furnival and Robert~W Wilson.
\newblock Regressions by leaps and bounds.
\newblock {\em Technometrics}, 16(4):499--511, 1974.

\bibitem[Gat06]{gatheral2006volatility}
Jim Gatheral.
\newblock {\em The Volatility Surface}.
\newblock Wiley Finance, 2006.

\bibitem[HE13]{hastie2013lars}
Trevor Hastie and Brad Efron.
\newblock {\em lars: Least Angle Regression, Lasso and Forward Stagewise},
  2013.
\newblock R package version 1.2.

\bibitem[Hes93]{heston1993closed}
S.~Heston.
\newblock A closed-form solution of options with stochastic volatility with
  applications to bond and currency options.
\newblock {\em The Review of Financial Studies}, 6:327--343, 1993.

\bibitem[HJ12]{horn2012matrix}
Roger~A Horn and Charles~R Johnson.
\newblock {\em Matrix analysis}.
\newblock Cambridge university press, 2012.

\bibitem[HKK06]{hubalek2006variance}
Friedrich Hubalek, Jan Kallsen, and Leszek Krawczyk.
\newblock Variance-optimal hedging for processes with stationary independent
  increments.
\newblock {\em The Annals of Applied Probability}, pages 853--885, 2006.

\bibitem[HTF13]{hastie2013elements}
Trevor Hastie, Robert Tibshirani, and Jerome Friedman.
\newblock {\em The elements of statistical learning}, volume~1.
\newblock Springer series in statistics Springer, Berlin, 2nd edition, 2013.

\bibitem[JS03]{JS00}
J.~Jacod and A.~Shiryaev.
\newblock {\em Limit theorems for stochastic processes}, volume 288 of {\em
  Grundlehren der Mathematischen Wissenschaften}.
\newblock Springer-Verlag, Berlin, second edition, 2003.

\bibitem[KP10]{KalP07}
J.~Kallsen and A.~Pauwels.
\newblock Variance-optimal hedging in general affine stochastic volatility
  models.
\newblock {\em Adv. in Appl. Probab.}, 42(1):83--105, 2010.

\bibitem[KW67]{kunita1967square}
Hiroshi Kunita and Shinzo Watanabe.
\newblock On square integrable martingales.
\newblock {\em Nagoya Mathematical Journal}, 30:209--245, 1967.

\bibitem[Lew00]{lewis2000option}
Alan~L. Lewis.
\newblock {\em Option Valuation under Stochastic Volatility}.
\newblock Finance Press, 2000.

\bibitem[Luk60]{lukacs1960characteristic}
Eugene Lukacs.
\newblock {\em Characteristic Functions}.
\newblock Charles Griffin {\&} Co Ltd., 1960.

\bibitem[Mui09]{muirhead2009aspects}
Robb~J Muirhead.
\newblock {\em Aspects of multivariate statistical theory}, volume 197.
\newblock John Wiley \& Sons, 2009.

\bibitem[Neu94]{neuberger1994log}
Anthony Neuberger.
\newblock The log contract.
\newblock {\em The Journal of Portfolio Management}, 20(2):74--80, 1994.

\bibitem[Neu98]{neumaier1998solving}
Arnold Neumaier.
\newblock Solving ill-conditioned and singular linear systems: A tutorial on
  regularization.
\newblock {\em SIAM review}, 40(3):636--666, 1998.

\bibitem[Pau07]{pauwels2007variance}
Arnd Pauwels.
\newblock {\em Variance-optimal hedging in affine volatility models}.
\newblock PhD thesis, PhD thesis at TU Munich, 2007.

\bibitem[{R C}16]{rct2016}
{R Core Team}.
\newblock {\em R: A Language and Environment for Statistical Computing}.
\newblock R Foundation for Statistical Computing, Vienna, Austria, 2016.

\bibitem[Rai00]{raible2000levy}
Sebastian Raible.
\newblock {\em L{\'e}vy processes in Finance: Theory, Numerics and Empirical
  Facts}.
\newblock PhD thesis, Albert-Ludwigs-Universit{\"{a}}t Freiburg, January 2000.

\bibitem[Sch84]{schweizer1984varianten}
M.~Schweizer.
\newblock Varianten der {B}lack-{S}choles-{F}ormel.
\newblock Master's thesis, ETH Z{\"u}rich, 1984.

\bibitem[Tib96]{tibshirani1996regression}
Robert Tibshirani.
\newblock Regression shrinkage and selection via the lasso.
\newblock {\em Journal of the Royal Statistical Society. Series B
  (Methodological)}, pages 267--288, 1996.

\end{thebibliography}

\begin{appendix}
\section{The proof of Theorem~\ref{thm:heston}}\label{app:heston}
We show Theorem~\ref{thm:heston} with the help of two lemmas.

\begin{lemma}\label{lem:DT}The set $\cD_T$ (from \eqref{eq:real_moments}) and the function $\psi_t(u,w)$ (from \eqref{eq:sol.psi}) have the following properties:
\begin{enumerate}[(a)]
\item The sets $\cD_T$ are open and convex.
\item If $(u,w) \in S(\cD_T)$ then $(u,\psi_{T-t}(u,w)) \in S(\cD_t)$. 
\item The functions $\phi_t(u,w)$ and $\psi_t(u,w)$ are analytic on $S(\cD_t)$.
\item If $(a,b)  \in \cD_T$, then $(a,b') \in \cD_T$ for all $b' \le b$.
\item $\Re \psi_t(u,w) \le \psi_t(\Re u, \Re w)$ for all $(u,w) \in S(\cD_t)$
\end{enumerate}
\end{lemma}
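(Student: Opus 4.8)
The plan is to treat the five items roughly in increasing order of difficulty, using throughout the explicit Riccati solutions \eqref{eq:sol.psi}--\eqref{eq:sol.phi} and the affine transform formula \eqref{eq:joi.char}. Item (d) is immediate: the variance process $V$ is a CIR diffusion, hence stays nonnegative, so for $b' \le b$ we have $0 \le \exp(aX_T + b'V_T) = \exp(aX_T + bV_T)\exp\big((b'-b)V_T\big) \le \exp(aX_T + bV_T)$ pointwise, and integrability is inherited. For item (a), convexity of $\cD_T$ follows from H\"older's inequality, which shows that $(u,w) \mapsto \log\E{\exp(uX_T+wV_T)}$ is a convex function with effective domain $\cD_T$; openness I would obtain either from the explicit description of $\cD_T$ via the moment-explosion time $T_\ast$ (cf.\@ \cite[Prop.~3.1]{AP07}, \cite{FK10}), or from the observation that $(u_0,w_0) \in \cD_T$ means the Riccati solution $s \mapsto \psi_s(u_0,w_0)$ stays finite on $[0,T]$, a property that survives small perturbations of $(u_0,w_0)$ by continuous dependence of ODE solutions on their data.

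For item (c) I would show that on $S(\cD_t)$ none of the ingredients of \eqref{eq:sol.psi}--\eqref{eq:sol.phi} meets its singular set: at points where $\Delta(u) = 0$ the two branches in \eqref{eq:sol.psi} and \eqref{eq:sol.phi} agree in the limit, so the apparent singularity is removable; and the denominator $1 - g\exp(-t\sqrt\Delta)$ together with the argument of the logarithm appearing in $\phi_t$ stay bounded away from $0$ and from the branch cut precisely because $(u,w) \in S(\cD_t)$ excludes moment explosion up to time $t$. Analyticity then follows since $\psi_t$ and $\phi_t$ become compositions of analytic functions on the resulting domain (the branch of $\sqrt{\Delta(u)}$ can be chosen consistently since $\Delta$ does not vanish there); alternatively one invokes analytic dependence of the Riccati solution on the parameter $(u,w)$ over its maximal domain of existence.

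Item (b) is the cocycle property of the affine transform. By the Markov property, for $(u,w) \in S(\cD_T)$ the affine transform formula --- valid here because the terminal exponential moment is finite, so the candidate exponential process, a nonnegative local martingale, is a genuine martingale, the complex case following by analytic continuation from the real one using (c) --- gives
\[
\E{\exp(uX_T+wV_T)} = \E{\Econd{\exp(uX_T+wV_T)}{\cF_t}} = \exp\big(\phi_{T-t}(u,w)\big)\,\E{\exp\big(uX_t + \psi_{T-t}(u,w)V_t\big)}.
\]
Since the left-hand side and $\phi_{T-t}(u,w)$ are finite, so is $\E{\exp(uX_t + \psi_{T-t}(u,w)V_t)}$, i.e.\@ $(u,\psi_{T-t}(u,w)) \in S(\cD_t)$.

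Finally, item (e) follows from (b) combined with conditional Jensen. Fix $(u,w) \in S(\cD_t)$ and $0 < s < t$; by (b) applied with horizon $t$ we have $(u,\psi_{t-s}(u,w)) \in S(\cD_s)$, and the affine transform formula yields both
\[
\big|\Econd{e^{uX_t+wV_t}}{\cF_s}\big| = \exp\big(\Re u\,X_s + \Re\phi_{t-s}(u,w) + \Re\psi_{t-s}(u,w)\,V_s\big)
\]
and $\big|\Econd{e^{uX_t+wV_t}}{\cF_s}\big| \le \Econd{e^{\Re u\,X_t + \Re w\,V_t}}{\cF_s} = \exp\big(\Re u\,X_s + \phi_{t-s}(\Re u,\Re w) + \psi_{t-s}(\Re u,\Re w)\,V_s\big)$. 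Cancelling $\exp(\Re u\,X_s)$ and comparing the two resulting affine functions of $V_s$ on the unbounded support of $V_s$ (for $s>0$) forces $\Re\psi_{t-s}(u,w) \le \psi_{t-s}(\Re u,\Re w)$; letting $s \downarrow 0$ and using continuity from (c) gives the claim. I expect the main obstacle to be item (c): carefully tracking the branch cuts of $\sqrt{\Delta(u)}$ and of the complex logarithm so as to establish analyticity --- and thereby also to license the affine transform formula for complex arguments --- on the whole strip $S(\cD_t)$; once that is in place the remaining items are short.
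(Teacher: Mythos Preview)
Your proposal is correct. The paper itself is much terser: for (a)--(c) it simply cites \cite{filipovic2009affine} rather than working out openness, the cocycle property, and analyticity by hand as you do; your sketches for these points are sound (modulo the bookkeeping with branch cuts in (c) that you already flag). For (d) you and the paper argue identically. For (e) the paper also uses Jensen's inequality on the affine transform formula, but applies it \emph{unconditionally} and then exploits that the formula \eqref{eq:joi.char} holds for every initial value $V_0>0$; letting $V_0\to\infty$ directly forces the coefficient inequality $\Re\psi_T(u,w)\le\psi_T(\Re u,\Re w)$. This is a bit more direct than your route via conditioning at $s>0$, using the unbounded support of $V_s$, and passing to the limit $s\downarrow 0$. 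One small point in your argument for (b): from finiteness of the complex expectation $\E{\exp(uX_t+\psi_{T-t}(u,w)V_t)}$ you cannot immediately conclude membership in $S(\cD_t)$, which requires integrability of the \emph{modulus}; this follows once you take absolute values inside the conditional expectation before integrating, using $|\Econd{e^{uX_T+wV_T}}{\cF_t}|\le \Econd{|e^{uX_T+wV_T}|}{\cF_t}$.
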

\begin{proof}
Properties (a), (b) and (c) are shown in \cite{filipovic2009affine}. For (d), note that $V_T \ge 0$ implies that 
\[\E{\exp\left(aX_T + b'V_T\right)} \le \E{\exp\left(aX_T + bV_T\right)}\]
for all $b' \le b$. For (e), note that that Jensen's inequality implies
\begin{multline*}
\exp\left( \Re \phi_T(u,w)+\Re \psi_T(u,w) V_0 +\Re u X_0\right) = \big| \E{\exp\left(uX_T + wV_T\right)}\big| \le \\ \le \E{\big|\exp\left(uX_T + wV_T\right)\big|} 
= \exp\left(\phi_T(\Re u,\Re w)+\psi_T(\Re u,\Re w) V_0 + \Re u X_0\right).
\end{multline*}
Since $V_0$ can be chosen arbitrarily large, (e) follows.
\end{proof}

\begin{lemma}\label{lem:Vdiff} Let $(X,V)$ be given by the Heston model \eqref{eq:Heston} and assume that $(u,w) \in \cS(D_t)$. Then
\[\E{e^{uX_t + wV_t} V_t} = \left\{ \partial_w \phi_t(u,w) + V_0 \partial_w \phi_t(u,w)\right\} e^{X_0} h(u,t,V_0).\]
\end{lemma}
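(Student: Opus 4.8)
The plan is to obtain the identity by differentiating the closed-form joint moment generating function \eqref{eq:joi.char} of the Heston model with respect to the variance argument $w$; the only genuine point is the justification of differentiation under the expectation.

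Write $g(u,w):=\E{e^{uX_t+wV_t}}$, which is finite for every $(u,w)\in S(\cD_t)$ since $|e^{uX_t+wV_t}|=e^{\Re u\,X_t+\Re w\,V_t}$ and $(\Re u,\Re w)\in\cD_t$. By \eqref{eq:joi.char} (with $T$ replaced by $t$) and Lemma~\ref{lem:DT}(c) we have, on the open strip $S(\cD_t)$,
\[
g(u,w)=\exp\big(\phi_t(u,w)+\psi_t(u,w)V_0+uX_0\big),
\]
with the right-hand side analytic in $(u,w)$; in particular $w\mapsto g(u,w)$ is holomorphic, and the chain rule gives $\partial_w g(u,w)=\big(\partial_w\phi_t(u,w)+V_0\,\partial_w\psi_t(u,w)\big)\,g(u,w)$.

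It then remains to identify $\partial_w g(u,w)$ with $\E{V_t\,e^{uX_t+wV_t}}$. Since $\cD_t$ is open (Lemma~\ref{lem:DT}(a)), choose $\delta>0$ with $(\Re u,\Re w+\delta)\in\cD_t$. Using $V_t\ge 0$ and the elementary bound $x e^{ax}\le C_\delta e^{(a+\delta)x}$ for $x\ge 0$, one gets for every $w'\in\CC$ with $|w'-w|\le\delta/2$
\[
\big|V_t\,e^{uX_t+w'V_t}\big|\le V_t\,e^{\Re u\,X_t+(\Re w+\delta/2)V_t}\le C_\delta\,e^{\Re u\,X_t+(\Re w+\delta)V_t}\in L^1(\QQ).
\]
Applying the mean value theorem to $s\mapsto e^{sV_t}$ along the segment from $w$ to $w'$ controls the difference quotients $(e^{uX_t+w'V_t}-e^{uX_t+wV_t})/(w'-w)$ by the same dominating function, so dominated convergence yields $\partial_w g(u,w)=\E{V_t\,e^{uX_t+wV_t}}$. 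Equating the two expressions for $\partial_w g(u,w)$ gives the asserted formula, the common factor $g(u,w)=\exp(\phi_t(u,w)+\psi_t(u,w)V_0+uX_0)$ reducing for $w=0$ to $e^{uX_0}h(u,t,V_0)$ with $h$ as defined in Theorem~\ref{thm:heston}. The main (and only) obstacle is precisely this interchange of $\partial_w$ and $\E{\cdot}$; the remaining steps are routine differentiation of the explicit expressions \eqref{eq:sol.psi}--\eqref{eq:sol.phi} for $\phi_t$ and $\psi_t$.
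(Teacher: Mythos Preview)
Your argument is correct. The only minor imprecision is invoking the ``mean value theorem'' for the complex map $s\mapsto e^{sV_t}$; what you actually need is the integral form
\[
\frac{e^{uX_t+w'V_t}-e^{uX_t+wV_t}}{w'-w}=\int_0^1 V_t\,e^{uX_t+(w+s(w'-w))V_t}\,ds,
\]
which is bounded by your dominating function for $|w'-w|\le\delta/2$. With that adjustment the dominated-convergence step is airtight.

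The paper takes a different route. Instead of building an explicit dominating function from the openness of $\cD_t$, it fixes the real parts $(a,b)=(\Re u,\Re w)$, performs an exponential tilt $\rmd\MM/\rmd\QQ\propto e^{aX_t+bV_t}$, and observes that under $\MM$ the map $(y,z)\mapsto\Ex{\MM}{e^{iyX_t+izV_t}}$ is a genuine characteristic function whose analyticity (inherited from $\phi_t,\psi_t$) lets one invoke the classical result that smoothness of a characteristic function implies existence of moments and termwise differentiation. Your approach is more elementary and self-contained; the paper's is shorter by outsourcing the interchange-of-limit step to standard characteristic-function theory. Both exploit the same structural fact---that $(\Re u,\Re w)$ lies in the \emph{open} set $\cD_t$---but package it differently.
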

\begin{proof}Fix $(a,b) \in D_t$ and consider $(u,w) \in \cS(D_t)$ of the form $(u = a + \ii y, w = b + \ii z)$. By assumption $K = \E{e^{aX_t + bV_t}}$ exists and is a number in $(0,\infty)$. Define a probability measure $\MM$ on $(\Omega, \cF_t$) by $\left.\frac{\rmd\MM}{\rmd\QQ}\right|_{\cF_t} = \exp(aX_t + bV_t)/K$, i.e.\@ by exponential tilting of $\QQ$. Clearly, the characteristic function of $(X_t,V_t)$ under $\MM$ is given by 
\[\Ex{\MM}{e^{iyX_t + izV_t}} = \Ex{\QQ}{e^{uX_t + wV_t}} = \exp\left(\phi_t(u,w) + V_0 \psi_t(t,u,w) + uX_0\right).\]
Due to the analyticity properties of $\phi_t(u,w)$ and $\psi_t(u,w)$, cf.\@ Lemma~\ref{lem:DT}(c), all partial derivatives of the left hand side with respect to $(y,z)$ exist. Standard results on differentiability of characteristic functions (cf.\@ \cite[Sec.~2.3]{lukacs1960characteristic}) yield that 
\[\Ex{\MM}{e^{iyX_t + izV_t} V_t} = - i \frac{\rmd}{\rmd z} \Ex{\MM}{e^{iyX_t + izV_t}} = \frac{\rmd}{\rmd w} \exp\left(\phi_t(u,w) + V_0 \psi_t(u,w) + uX_0\right).\]
Transforming the left hand side back to $\QQ$  yields the desired result.
\end{proof}

\begin{proof}[Proof of Theorem~\ref{thm:heston}]
First, we determine the relevant quantities of Proposition~\ref{thm:markov} in case of the Heston model. Using \eqref{eq:joi.char} and \eqref{eq:Heston} we obtain
\begin{align*}
h(u,T-t,V_t) &= \exp\Big(\phi_{T-t}(u,0) + V_t \psi_{T-t}(u,0)\Big), & \partial_v h(u,T-t,V_t) &= \psi_{T-t}(u) h(u,T-t,V_t),\\
\partial_v \gamma(T-t,V_t) &= \frac{1}{\lambda} \left(1 - e^{-\lambda (T-t)}\right), & dQ &= \sigma^2 (1 - \rho^2) V_t\,dt.
\end{align*}
Therefore, by Proposition~\ref{thm:markov},
\begin{align*}
\rmd\cA &= \frac{\sigma^2 (1 - \rho^2)}{\lambda^2}  \left(1 - e^{-\lambda(T-t)}\right)^2 V_t \,dt\\
\rmd\cB(u) &= \frac{\sigma^2 (1 - \rho^2)}{\lambda} \left(1 - e^{-\lambda(T-t)}\right) \psi_{T-t}(u) H_t(u)) V_t\,dt\\
\rmd\cC(u_1,u_2) &= \sigma^2 (1 - \rho^2) \psi_{T-t}(u_1) \psi_{T-t}(u_2) H_t(u_1) H_t(u_2)  V_t\,dt.
\end{align*}
If the expectations in \eqref{eq:expectations} are finite, then an application of Theorem~\ref{thm:markov} yields the desired representations of $A,B, C$. Thus, it remains to show integrability and to determine the explicit expressions in \eqref{eq:expectations}.\\
First, \eqref{eq:V} is easily obtained from the Heston SDE \eqref{eq:Heston}. To show \eqref{eq:HV} we make use of Lemma~\ref{lem:DT} and~\ref{lem:Vdiff}. Let $u = x + iz$ be element of some strip $\cS(R_j)$ and note that the integrability condition on $X_T$ implies that $(x,0) \in \cD_T$. From Lemma~\ref{lem:DT}(b) we conclude that $(x, \psi_{T-t}(x,0)) \in \cD_t$. Now $\Re \psi_{T-t}(u,0) \le \psi_{T-t}(0,x)$, together with Lemma~\ref{lem:DT}(d) shows that also $(\Re  u, \Re \psi_{T-t}(0,u)) \in \cD_t$, which is equivalent to $(u, \psi_{T-t}(u,0)) \in S(\cD_t)$. Applying Lemma~\ref{lem:Vdiff} with $w = \psi_{T-t}(u,0)$ yields \ref{eq:HV}.\\
For \eqref{eq:HHV} we can use a similar argument: Write $u_1 = x_1 + i z_1$ and $u_2 = x_2 + i z_2$. The integrability condition on $X_T$ implies that $(2x_1,0)$ and $(2x_2,0)$ are in $\cD_T$. From Lemma~\ref{lem:DT}(b) we conclude that $(2x_1,\psi_{T-t}(2x_1,0)) \in \cD_t$, and similarly for $x_2$. Convexity of $\cD_t$, see Lemma~\ref{lem:DT}(a), shows that $(x_1 + x_2, \frac{1}{2}q_{T-t}(2x_1, 2x_2)) \in \cD_T$. Now convexity of $\psi_{T-t}$ (and hence of $q_{T-t}$), together with Lemma~\ref{lem:DT}(d) yields that also $(x_1 + x_2, q_{T-t}(x_1,x_2) \in \cD_t$. To pass to complex arguments, note that Lemma~\ref{lem:DT}(e) implies that also 
$(u_1 + u_2, q_{T-t}(u_1,u_2)) \in S(\cD_T)$. Hence, we may apply Lemma~\ref{lem:Vdiff} with $u = u_1 + u_2$ and $w = q_{T-t}(u_1, u_2)$, which yields \eqref{eq:HHV}.
\end{proof}
\end{appendix}

\end{document}